%
%


\documentclass[a4paper,12pt]{amsart}
\usepackage{amsfonts,amsmath,amsthm}
\usepackage{amsmath,amssymb}  
\usepackage{graphicx}

%
%
\newcommand*{\scrN}{\mathord{\mathcal{N}}}%
\newcommand*{\AAA}{{\mathbb{A}}}
\newcommand*{\RR}{{\mathbb{R}}}
\newcommand*{\CC}{{\mathbb{C}}}
%
%
%
%
%
%
\newcommand*{\scrP}{\mathord{\mathcal{P}}}
\newcommand*{\scrS}{\mathord{\mathcal{S}}}

\DeclareRobustCommand\openone{\leavevmode\hbox{\small1\normalsize\kern-
.33em1}}%

\newcommand*{\EsubS}{\ensuremath{\mathord{\mathcal{E}_{\scrS}}}}




\newtheorem{theorem}{Theorem}
\newtheorem{definition}{Definition}
\newtheorem{lemma}[definition]{Lemma}
\newtheorem{proposition}[definition]{Proposition}


\begin{document}
\bibliographystyle{apsrev}

\title{\textbf{ Quantum real numbers and measurement}\\}

\author{J.V. CORBETT}
\email{john.corbett@mq.edu.au}
\address {Department of Mathematics, Macquarie University, N.S.W. 2109} %

\date{\today}

\begin{abstract}
The quantum mechanical measurement problem does not arise in the quantum real number approach to quantum measurements of the first kind.  The attributes of individual microscopic systems in the experimental  ensemble always have qr-number values so the individual systems can be followed throughout the process. The interaction with an apparatus connects the qr-number value of the quantity to be measured with the qr-number value of an attribute of the apparatus that can be locally approximated by a classical number and subsequently amplified to a recordable output. 
\end{abstract}



\maketitle

\section{Introduction}\label{intro} There are broadly two processes in which measurements are used in modern applications of quantum mechanics: the first is to determine the numerical value of a physical attribute of a quantum system, the second is to determine the state of the system\cite{1}.  In this paper the first problem is emphasised.

The measurement problem arises in standard quantum theories of both, for a recent discussion see Schlosshauser\cite{2}. The measurement problem has two parts: \begin{itemize}
\item The problem of definite outcomes.
\item The problem of the preferred basis.
\end{itemize} 
The first occurs because the measurement of a microscopic system $\mathcal{S}$ yields a probability distribution of the values of one or many attributes of $\mathcal{S}$. Any prediction can only be verified by experimental data obtained from an ensemble of identically prepared replicas of $\mathcal{S}$.  In order that the relative frequencies of the various outcomes can be determined, the final ensemble must be such that each outcome is observationally distinct. This basic requirement for determining probabilities is not satisfied in the standard Hilbert space quantum theories.

The second doesn't arise in the qr-number approach because it doesn't accept the premise that a wave function provides a complete state of a quantum system. A complete state in the qr-number model is given by an open set of quantum states, see \S \ref{basics} and \cite{19}.
\subsubsection{The standard description} \label{stan}The following is a simple example, from \cite{1} pp 75-78, illustrating the problem of definite outcomes in the standard quantum mechanical description of a measurement. There are two quantum systems: $\mathcal{S}$, the carrier of an attribute, represented by the operator $\hat A_{S},$ which is to be measured, and $\mathcal{M}$ a measurement apparatus with a pointer represented by an operator $\hat B_{M}$. At time $t_{1},$ $\mathcal{S}$ is in  a superposition of $\hat A_{S}$'s eigenstates $\psi_{S}^{\pm}$ (eigenvalues $\lambda^{\pm}$) while $\mathcal{M}$ is in the null eigenstate $\phi_{M}^{0}$ of its pointer operator $\hat B_{M}$. \begin{equation}\label{EQ1}
\Psi_{S,M}(t_{1}) = \psi_{S} \otimes \phi_{M}^{0} = (\alpha_{+} \psi_{S}^{+} + \alpha_{-} \psi_{S}^{-}) \otimes \phi_{M}^{0}. 
\end{equation} $\psi_{S}^{+} \perp \psi_{S}^{-}$ and $|\alpha_{+}|^{2} + |\alpha_{-}|^{2} = 1$ and all wave functions are unit vectors. The aim is to determine the distribution of the values $\lambda^{\pm}$ of the attribute $\hat A_{S}$ in the prepared system $S$. To this end the relative frequencies, $|\alpha_{+}|^{2}$ and $|\alpha_{-}|^{2}$ of the outcomes $\lambda^{\pm}$,  are determined from the ensemble of prepared systems.

An interaction between $\mathcal{S}$ and $\mathcal{M}$ produces an entangled state,  
\begin{equation}\label{EQ2}
\Psi_{S,M}(t_{2}) = \alpha_{+} \psi_{S}^{+}\otimes \phi_{M}^{+} + \alpha_{-} \psi_{S}^{-} \otimes \phi_{M}^{-},
\end{equation} at the time $t_{2} > t_{1}$. The vectors $\phi_{M}^{+} \perp \phi_{M}^{-}$ and are assumed to be macroscopically distinguishable eigenstates of $\hat B_{M}$.

Because the wave-function $\Psi_{S,M}(t_{2})$ is an entangled pure state and not a mixed state, it is not possible to ascribe a particular wave-function to $\mathcal{M}$. That is, there is no definite outcome associated with the measurement process. If we assume that a pure state describes the state of an ensemble of identical systems, the wave-function $\Psi_{S,M}(t_{2})$ describes that of an ensemble of combined $\mathcal{S}$ and $\mathcal{M}$ systems. It does not parametrise a variety of outcomes and hence does not determine the probabilities of different outcomes. This is the measurement problem.

The pure state $\Psi_{S,M}(t_{2})$ cannot evolve unitarily to a mixed state so that the Schr\"odinger evolution cannot deliver a definite outcome for the measurement problem. This is where the "collapse hypothesis" or "projection postulate" is inserted, \cite {1} \S 2.3.3, pp 86-91.  The standard unitary time evolution of quantum mechanics is replaced by a jump from the pure state, $\rho = \hat P _{\Psi_{S,M}(t_{2})}$, to the mixed state \begin{equation}\label{EQ3}
\rho' = |\alpha_{+}|^{2} \hat P_{\psi_{S}^{+}}\otimes \hat P _{\phi_{M}^{+}} +  |\alpha_{-}|^{2} \hat P_{\psi_{S}^{-}}\otimes \hat P _{\phi_{M}^{-}}
\end{equation} Then the mixed state collapses to $\hat P_{\psi_{S}^{+}}\otimes \hat P _{\phi_{M}^{+}}$ with probability $|\alpha_{+}|^{2}$ or to $\hat P_{\psi_{S}^{-}}\otimes \hat P _{\phi_{M}^{-}}$ with probability $|\alpha_{-}|^{2}.$ The justification of these assumptions is decidedly ad hoc and this has always been a contentious area of quantum mechanics and one which has been often taken as a sign of the incompleteness of the theory. Home \cite {1}, Chapter 2, has a good discussion of the issues that have arisen.

\section{The qr-number description} The qr-number values of the pertinent quantities always have a qr-number values, see \S \ref{basics},  so their trajectories throughout the experiment can be followed. We assume that both the system $S$, whose properties are to be measured and the measurement system $M$ are particles with non-zero masses, $m_{S}$ and $m_{M}$.

In the preparation stage, see \S \ref{prep}, the quantity $\hat Q_{S}$ to be measured is prepared so that it can be measured. An epistemic condition is prepared for an ensemble of $S$-particles. In the generic example $\hat Q_{S}$ has only two eigenvalues, $\lambda_{\pm}$ with eigenvectors $\phi_{\lambda_{\pm}}.$ Let $\psi_{S}(\vec \alpha)= \alpha_{+} \phi_{\lambda_{+}} + \alpha_{-} \phi_{\lambda_{-}}$, with $\vec \alpha= (\alpha_{+},  \alpha_{-}) \in \CC^{2}$ and $| \alpha_{+}|^{2} + | \alpha_{-}|^{2} = 1,$ be the wave-function for $S$ that was used in the standard description, \S \ref{stan}. Then lemma 2 of \S \ref{prep} shows that the prepared condition  is \begin{equation}
W_{S}(\vec \alpha)  = |\alpha_{+}|^{2} W_{S}^{+} +  |\alpha_{-}|^{2} W_{S}^{-}.\end{equation} where $W_{S}^{\pm} =  \mathcal{N}(\hat P_{\phi_{\lambda_{\pm}}},\hat Q_{S}, \delta)$ are designated epistemic conditions on which the qr-number values $q_{S}|_{W_{S}^{\pm}}$ are measurable with $
 q_{S}|_{W_{S}^{\pm} } \approx \lambda_{\pm}.$

For example, when $\hat Q_{S}$ is $\epsilon$-sharp collimated in an interval $I_{\pm}$ centred at the eigenvalue $\lambda_{\pm}$ on the condition $W_{S}^{\pm}$, see definition 1, \S\ref{measreq}, then $\hat Q_{S}$'s qr-number value $q_{S}|_{W_{S}^{\pm}}$ is well approximated  by the eigenvalue $\lambda_{\pm}$. The coefficient $ |\alpha_{\pm}|^{2}$ are the  relative frequencies of preparing $W_{S}^{\pm}$. At the same time the measurement device is prepared in a condition $W_{M}^{0}$ so that the pointer variable $\hat Q_{M}$ is  $\epsilon$-sharp collimated in an interval $I_{0}$ centred at $0$.

In the interaction stage the "particles" $S$ and $M$ interact through a von Neumann impulsive interaction, $\hat H_{I} = \gamma \hat Q_{S}\otimes \hat P_{M}$, see \S \ref{prem},  causing a change in the qr-number value of $\hat Q_{M}$ proportional to the   qr-number value of $\hat Q_{S}$ which doesn't change. When the interact lasts from $t_{1}$ to $t_{2}$, if  the ontic condition of $S$ is $V_{S}^{\pm} \subset W_{S}^{\pm}$ then
 \begin{equation}
q_{M}|_{W_{M}^{0}}(t_{2}) - q_{M}|_{W_{M}^{0}}(t_{1}) = \kappa_{M}q_{S}(V_{S}^{\pm})
\end{equation}where $\kappa_{M} = \frac{\gamma}{m_{M}}(t_{2} - t_{1})$. But $q_{S}(V_{S}^{\pm}) \approx \lambda_{\pm}|_{V_{S}^{\pm}}$ so that the change in the pointer's reading will be proportional to $\lambda_{\pm}.$ In \S \ref{chan}, we show that $q_{M}|_{W_{M}^{0}}(t_{2})  = q_{M}|_{W_{M}^{\pm}} $ where $W_{M}^{\pm}$ is a condition on which $\hat Q_{M}$ is measurable, $W_{M}^{\pm}$ depends on whether  $W_{S}^{\pm}$ was prepared. The problem of definite outcomes does not exist in the qr-number model. The preparation of $S$ ensures that $S$ has an ontic condition that is an open subset of one of the designated epistemic conditions  $W_{S}^{\pm}.$ The outcome for the ensemble is the determination of the relative frequencies, $|\alpha_{+}|^{2}$ and $|\alpha_{-}|^{2}$. This does not determine the wave-function $\psi_{S}(\vec \alpha)$.

The pointer outcome can be amplified, this is discussed in \S \ref{ampl}.

\subsection{Basics of the qr-number  model}\label{basics} The mathematics of the qr-number model, introduced in \cite{14}, is built upon a Hilbert space formalism. It uses a spatial topos, defined in \cite{12} and \cite{13}, to obtain qr-numbers as the numerical values taken by physical attributes of a quantum system.  

In the qr-number model the quantum system always has a complete state, called its condition, given by an open subset of the smooth state space $\EsubS(\mathcal{A}_{S})$, defined in \S \ref{Math},  and all physical attributes retain their qr-number values even when not being observed. The qr-numbers are contextual, the qr-number value of a physical attribute is essentially a function with values in $\RR$ whose domain is the system's condition. 

There are two classes of quantum conditions: (1) the  epistemic condition of an ensemble of systems depends upon the experimental setup and (2) the ontic condition of an individual system in the ensemble. Any open subset of $\EsubS(\mathcal{A}_{S})$ can be in either class but an ontic condition is always proper open subset of an epistemic condition. The existence of ontic conditions explains the variation in the individual outcomes in an experiment. In general a mixed condition of the form $\sum_{j} \lambda_{j} W_{j}$ for $0< \lambda_{j} < 1$, $\sum_{j}\lambda_{j} = 1$ and $W_{j} \in \mathcal{O}(\EsubS(\mathcal{A}_{S}))$ is an epistemic condition, each $\lambda_{j}$ is interpreted as the probability preparing the ensemble in $W_{j}$.  

The physical attributes of a system are represented by the elements of an  O$^{\ast}$-algebra $\mathcal{A}_{S}$, see \cite{9}, of unbounded operators  on a dense subset $\mathcal{D}$ of the system's Hilbert space $\mathcal{H}_{S}$. \footnote{It is not necessary that all attributes are represented in an O$^{\ast}$-algebras, in the Stern-Gerlach experiment spin is represented by bounded operators on $\CC^{2}$.}O$^{\ast}$-algebras allow us to directly represent physical qualities like energy, momentum and position of a particle. When the system is a massive Galilean relativistic quantum particle it has a trajectory in its qr-number space, see \cite{16} and \cite{17} for some examples. In this paper each $O^{\ast}$-algebra comes from a unitary representation $\hat U$ of a Lie group $G$ on $\mathcal{H}$, see \S \ref{Math}. The set of $C^{\infty}$-vectors for $\hat U$, denoted $\mathcal{D}^{\infty}(\hat U)$, is a dense linear subspace of $\mathcal H$ which is invariant under $\hat U(g), \; g \in G$, \cite{19} has more details. The system's smooth state space, $\EsubS(\mathcal{A}_{S}),$ is contained in the convex hull of projections $\scrP$ onto one-dimensional subspaces spanned by unit vectors $\phi  \in \mathcal D$.

\subsubsection{ Qr-number probabilities}\label{qrprob}
The spectral families of self-adjoint operators are used to define quantum probability measures on $\RR$ in \cite{3}. If $ \hat P^{\hat A}(S)$ is the spectral projection operator of $\hat A$ on the Borel subset $S$ of $\RR$, then in the standard interpretation $\mu_{\rho}^{\hat A}(S) = Tr \rho \hat P^{\hat A}(S)$ is the probability that when the system is in the state $\rho$ a measurement of $\hat A$ gives a result in the set $S$.

If the system has the condition $U$, the qr-number probability that $a(U)$ lies in $S$ is $\pi^{\hat A}(S)|_{U}$, the qr-number value of $ \hat P^{\hat A}(S)$ at $U$.

If $U = \nu(\rho_{s}; \delta)$ for $\delta \ll 1$ then, for all Borel sets $S$, $\pi^{\hat A}(S)|_{U} \approx Tr \rho_{s} \hat P^{\hat A}(S) = \mu_{\rho_{s}}^{\hat A}(S)$,  the standard quantum mechanical probability when the system is in the state $\rho_{s}$. so that  $|\pi^{\hat A}(S)|_{U} - \mu_{\rho_{s}}^{\hat A}(S)| < \delta$.

\subsection{ Measurement in the qr-number model}\label{meas}
Measurements are a special class of interactions between two physical systems. The system $\mathcal{S}$ has an attribute, called the measurand, whose value is to be determined. The interaction couples the measurand to a pointer of the measurement apparatus $\mathcal{M}$ whose numerical value can be read. As a result of the interaction the numerical value of $\mathcal{M}$'s pointer is changed by an amount that depends on the value of the measurand which is deducible from the difference of the pointer values. Both $\mathcal{S}$ and $\mathcal{M}$ are assumed to be quantum systems.

\subsubsection{ No measurement is exact}\label{measreq}
The qr-number model accepts that no measurement is exact. In metrology, see \cite{20}, any physical measurement is said to have two components: (1) A numerical value (in a specified system of units) giving the best estimate possible of the quantity measured, and (2) a measure of precision associated with this estimated value. The measure of precision is a parameter that characterises the range of values within which the value of the measurand can lie at a specified level of confidence. The best estimate is quantified by a level of confidence parameter $( 1 - \epsilon)$ in the range $[0,1]$.

The way these parameters are used in the qr-number model is exemplified in \cite{15} by the processes of passing a system $\mathcal{S}$ through a filter. 
The $\epsilon$ sharp collimation of the quantity, represented by $\hat Q_{S}$, in an interval $I \subset \RR$ when the system $\mathcal{S}$ has the condition $W_{S}$ gives a standard real number to approximate the qr-number $q_{S}|_{W_{S}}$.
\begin{definition} For an interval $I$, of width $|I|$, if $W_{S}$ is the largest convex open set in $\EsubS(\mathcal{A}_{S})$ such that $q_{S}|_{W_{S}} \subset I  \; \text{and}  \; (q_{S}^{2}|_{W_{S}} - (q_{S}|_{W_{S}})^{2}) \leq  \frac{\epsilon}{4}|I|^{2}$ then $\hat Q_{S}$ is $\epsilon$ sharp collimated in $I$ on $W_{S}$. \end{definition} Let $\sigma(\hat Q_{S})$ be $\hat Q_{S}$'s  spectrum.  If $W_{S}$ is the condition on which $\hat Q_{S}$ is $\epsilon$ sharp collimated on $I$ and $\exists  \alpha_{0}\in I \cap \sigma(\hat Q_{S})$, then with precision $ |I|/2$ and confidence $(1- \epsilon)$, $\alpha_{0}$ is the measured value of $\hat Q_{S}$. 

\subsubsection{Measurement Conditions}
Conditions that support determining a value of $\hat Q_{S}$ in an interval $I$ are of the form $\mathcal{N}(\hat P_{\phi_{\lambda}},\hat Q_{S}, \delta) = \{\rho \in \EsubS(\mathcal{A}_{S}) : |Tr(\rho \hat Q_{S} - \hat P_{\phi_{\lambda}}\hat Q_{S})|<\delta\}$ where $\phi_{\lambda}$ is an (approximate) eigenstate for some $\lambda \in \sigma_{c}(\hat Q_{S})\cap I,$.. In \cite {15} we prove the following.
\begin{theorem}\label{TH 1}  If $\lambda \in \sigma(\hat Q_{S})$, there exists an interval $I_{\lambda}$ centred on $\lambda$ in which $\hat Q_{S}$ is $\epsilon$ sharp collimation on $\mathcal{N}(\hat P_{\phi_{\lambda}},\hat Q_{S}, \delta) $ for some $\delta > 0.$
\end{theorem}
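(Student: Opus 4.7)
The plan is to build the approximate eigenstate $\phi_{\lambda}$ first, use it to transfer good bounds on $\hat{Q}_{S}$ to every smooth state in $\mathcal{N}(\hat{P}_{\phi_{\lambda}},\hat{Q}_{S},\delta)$, and then tune the width of $I_{\lambda}$ and the tolerance $\delta$ so that both clauses of Definition~1 are satisfied simultaneously.

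For the construction of $\phi_{\lambda}$, I would split on whether $\lambda\in\sigma_{p}(\hat{Q}_{S})$ or $\lambda\in\sigma_{c}(\hat{Q}_{S})$. In the point-spectrum case an honest eigenvector $\phi_{\lambda}\in\mathcal{D}$ gives $\langle\phi_{\lambda},\hat{Q}_{S}\phi_{\lambda}\rangle=\lambda$ and zero variance. In the continuous case the spectral theorem, applied inside the dense set of $C^{\infty}$-vectors $\mathcal{D}^{\infty}(\hat{U})$, produces for each $\eta>0$ a unit vector $\phi_{\lambda}\in\mathcal{D}^{\infty}(\hat{U})$ whose spectral measure under $\hat{Q}_{S}$ is concentrated in $[\lambda-\eta,\lambda+\eta]$, so that $\mu_{\lambda}:=\langle\phi_{\lambda},\hat{Q}_{S}\phi_{\lambda}\rangle$ obeys $|\mu_{\lambda}-\lambda|\leq\eta$ and $\langle\phi_{\lambda},(\hat{Q}_{S}-\mu_{\lambda})^{2}\phi_{\lambda}\rangle\leq\eta^{2}$. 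In either case, the defining inequality of $\mathcal{N}$ gives, for every $\rho$ in it, $|Tr(\rho\hat{Q}_{S})-\lambda|<\delta+\eta$; choosing $I_{\lambda}=(\lambda-r,\lambda+r)$ with $r>\delta+\eta$ secures the first-moment condition $q_{S}|_{\mathcal{N}}\subset I_{\lambda}$.

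The main obstacle is the second clause, the variance bound $Tr(\rho\hat{Q}_{S}^{2})-(Tr(\rho\hat{Q}_{S}))^{2}\leq\tfrac{\epsilon}{4}|I_{\lambda}|^{2}=\epsilon r^{2}$, because the defining inequality of $\mathcal{N}(\hat{P}_{\phi_{\lambda}},\hat{Q}_{S},\delta)$ as written constrains only the expectation of $\hat{Q}_{S}$, not of $\hat{Q}_{S}^{2}$, and a purely first-moment neighbourhood admits states that put a small weight on spectral regions far from $\lambda$ and therefore have uncontrolled second moment. The remedy I would pursue is to exploit the fact that $\rho\in\EsubS(\mathcal{A}_{S})$ is a smooth state, represented as a convex combination of projections onto $C^{\infty}$-vectors, so that $Tr(\rho\hat{Q}_{S}^{2})$ is finite and depends continuously on $\rho$ in the topology that makes $\mathcal{N}$ open; combined with the analogous estimate centred at $\langle\phi_{\lambda},\hat{Q}_{S}^{2}\phi_{\lambda}\rangle=\mu_{\lambda}^{2}+O(\eta^{2})$, this yields a bound of the form $\mathrm{Var}_{\rho}(\hat{Q}_{S})\leq \eta^{2}+C(\lambda)\delta$ for some constant $C(\lambda)$ depending only on $\lambda$.

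With both bounds in hand I would balance the free parameters: first shrink $\eta$ so that $\eta^{2}$ is much smaller than $\epsilon r^{2}$, then shrink $\delta$ so that $C(\lambda)\delta<\tfrac{1}{2}\epsilon r^{2}$ and $\delta+\eta<r$, and finally fix $r$ small enough that the resulting $I_{\lambda}$ lies in whatever neighbourhood of $\lambda$ is relevant. The triple $(I_{\lambda},\delta,\phi_{\lambda})$ then certifies that $\hat{Q}_{S}$ is $\epsilon$-sharp collimated in $I_{\lambda}$ on $\mathcal{N}(\hat{P}_{\phi_{\lambda}},\hat{Q}_{S},\delta)$. The delicate point, and the one genuinely using the smooth-state-space structure of $\EsubS(\mathcal{A}_{S})$, is turning the first-moment control into a second-moment control in the middle step.
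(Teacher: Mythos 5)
First, note that the paper itself does not prove this theorem: it is stated and immediately attributed to reference [15], so the only in-paper material to compare against is the definition of $\epsilon$ sharp collimation and the strictly-$\epsilon$-sharp results in the appendix, which work with the spectral projections $\hat P^{\hat A}(I)$ and control quantities like $Tr|\rho-\hat P^{\hat A}(I)\rho\hat P^{\hat A}(I)|$. Your first-moment step is correct and essentially forced: membership in $\mathcal{N}(\hat P_{\phi_{\lambda}},\hat Q_{S},\delta)$ gives $|Tr(\rho\hat Q_{S})-\lambda|<\delta+\eta$, and taking $|I_{\lambda}|/2>\delta+\eta$ yields $q_{S}|_{W}\subset I_{\lambda}$.

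The gap is exactly at the step you yourself flag as delicate, and the remedy you propose does not close it. The set $\mathcal{N}(\hat P_{\phi_{\lambda}},\hat Q_{S},\delta)$ is cut out by the single functional $\rho\mapsto Tr(\rho\hat Q_{S})$, so it contains states whose spectral weight sits far from $\lambda$ on both sides: if $\mu_{\pm}\in\sigma(\hat Q_{S})$ lie at distance $L$ on either side of $\lambda$, a balanced mixture of (approximate) eigenstates at $\mu_{\pm}$ has first moment equal to $\lambda$, hence lies in $\mathcal{N}(\hat P_{\phi_{\lambda}},\hat Q_{S},\delta)$ for every $\delta>0$, yet has variance of order $L^{2}$. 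Consequently no bound of the form $\mathrm{Var}_{\rho}(\hat Q_{S})\leq\eta^{2}+C(\lambda)\delta$ holds uniformly on that set, and for unbounded $\hat Q_{S}$ (a position operator, say) the variance is not even bounded there. The appeal to weak continuity of $\rho\mapsto Tr(\rho\hat Q_{S}^{2})$ cannot rescue this: weak continuity delivers control of the second moment only on neighbourhoods that themselves constrain $\hat Q_{S}^{2}$ (finite intersections of sub-basic sets including $\mathcal{N}(\cdot,\hat Q_{S}^{2},\cdot)$), or on trace-norm balls $\nu(\hat P_{\phi_{\lambda}},\delta)$ combined with a graph-norm estimate of the type the paper quotes with the constant $\kappa_{m}(\hat A)$ — not on a first-moment strip, which is unbounded in precisely the directions that move $Tr(\rho\hat Q_{S}^{2})$ while fixing $Tr(\rho\hat Q_{S})$. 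What your argument is missing is any mechanism forcing $Tr(\rho\hat P^{\hat Q_{S}}(\RR\setminus I_{\lambda}))$ to be small for every admissible $\rho$; that is the ingredient the strictly-$\epsilon$-sharp machinery supplies and that a proof of this theorem has to import, either by working on $\nu(\hat P_{\phi_{\lambda}},\delta)$ or by adding a second-moment (or spectral-projection) constraint to the defining neighbourhood.
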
 This results builds upon the assumption of standard quantum theory  that the results of measurements are the eigenvalues of the operator which represents the quantity being measured.

A similar result holds for strictly $\epsilon$ sharp collimation, defined in \S \ref{strictesc}, which is used to define qr-number probabilities and to show in \cite{15} that every attribute of $S$ appears to have undergone a L\"uders-von Neumann transformation when the collimation is strictly $\epsilon$ sharp.

In order to complete the determination of a measured value for $\hat Q_{S}$ the system $S$ must interact with a measurement system $M$. The interaction connects the $\epsilon$ sharply collimated qr-number value of $\hat Q_{S}$ with a constant qr-number value of the pointer of $M$ which is observable.

\section{Preparing for a measurement}\label{prep}
There are two ways in which we can describe  the preparation of the system $S$ in the generic example of \S \ref{intro}. In this experiment there is only one attribute $\hat Q_{S}$ to be measured, it has only two eigenvalues $\{\lambda_{s}\}_{ s=\pm}$ whose corresponding eigenvectors $\{\phi_{\lambda_{s}}\}_{s=\pm}$ span a two dimensional subspace $\mathcal{M}_{2}\subset\mathcal{H}_{S}$.

We can use the qr-number model to describe attempts to prepare a state $\hat P_{\psi_{S}(\vec \alpha)}$ where $\psi_{S}(\vec \alpha) = \alpha_{+} \phi_{\lambda_{+}} + \alpha_{-} \phi_{\lambda_{-}} $is a superposition of $\hat Q_{S}$'s eigenstates $\phi_{\lambda_{\pm}}.$ Let $\Gamma = \{\vec \alpha = (\alpha_{+}, \alpha_{-})\in \CC^{2}: |\alpha_{+}|^{2} + |\alpha_{-}|^{2} = 1\}$. The vectors $ \phi_{\lambda_{+}} $ and $ \phi_{\lambda_{-}} $ are orthonormal and if $\vec \alpha \in \Gamma$ then $\psi _{S}(\vec \alpha)$ is normalised. 

The condition $W_{S}^{pure}(\vec \alpha) =  \mathcal{N}(\hat P_{\psi_{S}(\vec \alpha)}, \hat Q_{S}, \delta)$ is centred on the state $\hat P_{\psi_{S}(\vec \alpha)}$. If $W_{S}^{pure} = \cup_{\vec\alpha \in \Gamma} W_{S}^{pure}(\vec \alpha) $ is the prepared epistemic condition of $S$, then $W_{S}^{pure}(\vec \alpha)$ for a particular  pair $\vec\alpha \in \Gamma$ is an ontic condition.

Alternatively assume that the fraction $|\alpha_{+}|^{2}$ of an ensemble is prepared in an epistemic condition $W_{S}^{+} =  \mathcal{N}(\hat P_{\phi_{\lambda_{+}}},\hat Q_{S}, \delta) $, centred on the eigenstate $\hat P_{\phi_{\lambda_{+}}}$, whilst the fraction $|\alpha_{-}|^{2} = 1 - |\alpha_{+}|^{2}$  is prepared in the epistemic  condition $W_{S}^{-} =  \mathcal{N}(\hat P_{\phi_{\lambda_{-}}},\hat Q_{S}, \delta) $, centred on  $\hat P_{\phi_{\lambda_{-}}}.$ The epistemic condition of this ensemble is $W_{S}^{mix}(\vec \alpha) =  |\alpha_{+}|^{2}W_{S}^{+} + |\alpha_{-}|^{2}W_{S}^{-}.$ 

If $\rho_{S}^{mix}(\vec \alpha) = | \alpha_{+}|^{2} \hat P_{\phi_{\lambda_{+}}} +  |\alpha_{-}|^{2} \hat P_{\phi_{\lambda_{-}}}$ then $W_{S}^{mix} = \mathcal{N}(\rho_{S}^{mix}(\vec \alpha) , \hat Q_{S}, \delta)$ as $W_{S}^{\pm} =  \mathcal{N}(\hat P_{\phi_{\lambda_{\pm}}},\hat Q_{S}, \delta).$ 
\begin{lemma} 
\begin{equation}        W_{S}^{pure}(\vec \alpha) =W_{S}^{mix}(\vec \alpha)  = |\alpha_{+}|^{2}W_{S}^{+} + |\alpha_{-}|^{2}W_{S}^{-}.
\end{equation} so the two ways of preparing $S$ produce the same open subset of states.
\end{lemma}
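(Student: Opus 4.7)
The central observation is that the two $\mathcal{N}$-type neighborhoods defining $W_{S}^{pure}(\vec\alpha)$ and $W_{S}^{mix}(\vec\alpha)$ are centered on density operators that produce the \emph{same} expectation value for $\hat Q_{S}$. First I would compute $\mathrm{Tr}(\hat P_{\psi_{S}(\vec\alpha)}\hat Q_{S})$ by expanding $\psi_{S}(\vec\alpha)=\alpha_{+}\phi_{\lambda_{+}}+\alpha_{-}\phi_{\lambda_{-}}$ in its orthonormal eigenbasis; since $\hat Q_{S}\phi_{\lambda_{\pm}}=\lambda_{\pm}\phi_{\lambda_{\pm}}$ and $\langle\phi_{\lambda_{+}}|\phi_{\lambda_{-}}\rangle=0$, the cross terms vanish and one obtains $|\alpha_{+}|^{2}\lambda_{+}+|\alpha_{-}|^{2}\lambda_{-}$. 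The same quantity follows from $\mathrm{Tr}(\rho_{S}^{mix}(\vec\alpha)\hat Q_{S})$ by linearity of the trace applied directly to the convex sum $|\alpha_{+}|^{2}\hat P_{\phi_{\lambda_{+}}}+|\alpha_{-}|^{2}\hat P_{\phi_{\lambda_{-}}}$.

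Because the defining inequality $|\mathrm{Tr}(\rho\hat Q_{S}-\cdot\,\hat Q_{S})|<\delta$ for membership in $\mathcal{N}(\cdot,\hat Q_{S},\delta)$ depends on the center operator only through its trace against $\hat Q_{S}$, the above equality of expectations forces $\mathcal{N}(\hat P_{\psi_{S}(\vec\alpha)},\hat Q_{S},\delta)=\mathcal{N}(\rho_{S}^{mix}(\vec\alpha),\hat Q_{S},\delta)$, which gives the first equality $W_{S}^{pure}(\vec\alpha)=W_{S}^{mix}(\vec\alpha)$.

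For the third equality $W_{S}^{mix}(\vec\alpha)=|\alpha_{+}|^{2}W_{S}^{+}+|\alpha_{-}|^{2}W_{S}^{-}$ I would interpret the right-hand side, per the convention announced in \S\ref{basics}, as the Minkowski convex combination $\{|\alpha_{+}|^{2}\rho_{+}+|\alpha_{-}|^{2}\rho_{-}:\rho_{\pm}\in W_{S}^{\pm}\}$. The forward inclusion is a triangle-inequality estimate: if $\rho_{\pm}\in\mathcal{N}(\hat P_{\phi_{\lambda_{\pm}}},\hat Q_{S},\delta)$, then linearity of the trace together with $|\alpha_{+}|^{2}+|\alpha_{-}|^{2}=1$ gives $|\mathrm{Tr}(\rho\hat Q_{S}-\rho_{S}^{mix}\hat Q_{S})|\le|\alpha_{+}|^{2}\delta+|\alpha_{-}|^{2}\delta=\delta$, with strict inequality whenever at least one $\rho_{\pm}$ sits strictly inside its neighborhood.

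The main obstacle is the reverse inclusion, which requires that every $\rho\in W_{S}^{mix}(\vec\alpha)$ admit a decomposition $\rho=|\alpha_{+}|^{2}\rho_{+}+|\alpha_{-}|^{2}\rho_{-}$ with $\rho_{\pm}\in W_{S}^{\pm}$. I would argue this by exhibiting such a decomposition constructively: given $\rho\in\EsubS(\mathcal{A}_{S})$ satisfying the $\mathcal{N}(\rho_{S}^{mix},\hat Q_{S},\delta)$ bound, set $\rho_{\pm}$ to be a suitably chosen convex perturbation of $\hat P_{\phi_{\lambda_{\pm}}}$ whose trace against $\hat Q_{S}$ absorbs the deviation of $\mathrm{Tr}(\rho\hat Q_{S})$ from $\mathrm{Tr}(\rho_{S}^{mix}\hat Q_{S})$, and verify that each $\rho_{\pm}$ satisfies the $\delta$-bound that defines $W_{S}^{\pm}$. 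Alternatively, one can invoke the paper's convention that a formal mixed condition $\sum_{j}\lambda_{j}W_{j}$ just \emph{is} the convex combination of its component open sets, in which case the identity reduces to matching centers and becomes tautological.
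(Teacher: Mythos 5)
Your proposal is correct and follows essentially the same route as the paper: the first equality holds because membership in $\mathcal{N}(\cdot,\hat Q_{S},\delta)$ depends on the centre only through $\mathrm{Tr}(\cdot\,\hat Q_{S})$, which agrees for $\hat P_{\psi_{S}(\vec\alpha)}$ and $\rho_{S}^{mix}(\vec\alpha)$ by orthogonality of the eigenvectors, and the second equality is the two-inclusion argument of the decomposition lemmas in \S\ref{decomp}. The one step you only sketch, the reverse inclusion, is completed in the paper exactly along the lines you indicate, via the explicit choice $\rho_{\pm}=\hat P_{\phi_{\lambda_{\pm}}}+(\rho-\rho_{S}^{mix}(\vec\alpha))$, which satisfies $|\mathrm{Tr}(\rho_{\pm}\hat Q_{S}-\hat P_{\phi_{\lambda_{\pm}}}\hat Q_{S})|=|\mathrm{Tr}(\rho\hat Q_{S}-\rho_{S}^{mix}(\vec\alpha)\hat Q_{S})|<\delta$ and recombines convexly to $\rho$. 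Your ``tautological'' fallback, however, does not quite work: even with the Minkowski-sum reading of $|\alpha_{+}|^{2}W_{S}^{+}+|\alpha_{-}|^{2}W_{S}^{-}$, the inclusion of $\mathcal{N}(\rho_{S}^{mix}(\vec\alpha),\hat Q_{S},\delta)$ into that sum is a genuine set-theoretic claim that still requires the explicit decomposition above.
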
 $\mathcal{N}(\hat P_{\psi_{S}(\vec \alpha)}, \hat Q_{S}, \delta) = \mathcal{N}(\rho_{S}^{mix}(\vec \alpha) , \hat Q_{S}, \delta)$ because $\phi_{\lambda_{+}} \text{and} \; \phi_{\lambda_{-}} $ are orthogonal eigenvectors of $\hat Q_{S}$, for all $\delta > 0.$ Moreover  $\mathcal{N}(\hat P_{\psi_{S}(\vec \alpha)}, \hat Q_{S}, \delta)$ can be decomposed, see \S \ref{decomp}, as \begin{equation}
 \mathcal{N}(\rho_{S}^{mix}(\vec \alpha) , \hat Q_{S}, \delta) = |\alpha_{+}|^{2} \mathcal{N}(\hat P_{\phi_{\lambda_{+}}},\hat Q_{S}, \delta) +  |\alpha_{-}|^{2} \mathcal{N}(\hat P_{\phi_{\lambda_{-}}},\hat Q_{S}, \delta).\end{equation}
 
The coefficients $\{ |\alpha_{r}|^{2}\}_{r=\pm}$ are the frequency probabilities that when $S$ has the condition $W_{S}(\vec \alpha) = \mathcal{N}(\hat P_{\psi_{S}(\vec \alpha)}, \hat Q_{S}, \delta)$ the attribute $\hat Q_{S}$ is located in intervals $\{I_{\pm}\}$, centred on the eigenvalues $\lambda_{\pm}$. The qr-number probability for location in $I_{+}$ is $ |\alpha_{+}|^{2}\pi^{\hat Q_{S}}(I_{+})|_{W_{S}^{+}},$  \S\ref{qrprob}, because
$\pi^{\hat Q_{S}}(I_{+})|_{W_{S}(\vec\alpha)} = |\alpha_{+}|^{2}\pi^{\hat Q_{S}}(I_{+})|_{W_{S}^{+}}$ as $\pi^{\hat Q_{S}}(I_{+})|_{W_{S}^{-}} =0.$

Therefore, in the fraction $ |\alpha_{+}|^{2}$ of preparation procedures,  $S$ is prepared in an ontic condition $V_{S}(\vec \alpha) \in \mathcal{O}(W_{S}^{+}) $ and, in the fraction $ |\alpha_{-}|^{2}$ of procedures, $S$ is prepared in an ontic condition $V_{S}(\vec \alpha) \in \mathcal{O}(W_{S}^{-}).$ The goal of the experiment is to determine these relative frequencies.

If $\mathcal{M}$ is prepared in a epistemic condition $W_{M}^{0} = \mathcal{N}( \hat P_{\phi_{M}}, \hat Q_{M}, \delta)$ where $\phi_{M}$ is an  eigenstate of the operator $\hat Q_{M}$ for eigenvalue $0$ and $S$ was prepared $W_{S}(\vec \alpha) = \mathcal{N}(\hat P_{\psi_{S}(\vec \alpha)}, \hat Q_{S}, \delta)$. then the combined system has  $W_{S,M}(\vec \alpha, 0) = W_{S}(\vec \alpha)\otimes W_{M}^{0} = \mathcal{N}(\hat P_{\psi_{S}(\vec \alpha)}\otimes \hat P_{\phi_{M}}, \hat Q_{S}\otimes \hat Q_{M}, 2\delta)$ whose central state $\hat P_{\Psi_{S, M}}$ projects onto the product wave-function of the standard model, \begin{equation}\label{EQ5}
\Psi_{S,M}(t_{1}) = \psi_{S}(\vec \alpha)\otimes \phi_{M} = (\alpha_{+} \phi_{\lambda_{+}} + \alpha_{-} \phi_{\lambda_{-}}) \otimes \phi_{M}.\end{equation} However using Lemma 2 the product condition can also be expressed\begin{equation}
W_{S,M}(\vec \alpha, 0)  = \sum_{r =\pm} |\alpha_{r}|^{2} \mathcal{N}(\hat P_{\phi_{\lambda_{r}}}\otimes \hat P_{\phi_{M}}, \hat Q_{S} \otimes \hat Q_{M}, 2\delta)
\end{equation} If $S$ is prepared in an ontic condition $W_{S}(\vec \alpha_{\pm}) \subset W_{S}^{\pm}$ and $M$ in an ontic condition $V_{M} \subset W_{M}^{0}$ then the qr-number values of $\hat Q_{S}$ and $\hat Q_{M}$ are  \begin{equation}\label{EQU9}
q_{S}|_{W_{S}(\vec \alpha_{+})} \approx  \lambda_{+} \;  \text{and}\; q_{S}|_{W_{S}(\vec \alpha_{-})} \approx  \lambda_{-}   \; \text{while}\;q_{M}|_ {V_{M}}\approx 0.
\end{equation} These conditions are such, see theorem \ref{TH1}, that if $\hat Q_{S}$ and $\hat Q_{M}$ were measured at this stage of the experiment, $\hat Q_{S}$  would register a value $ \lambda_{+}$ or $ \lambda_{-}$ and $\hat Q_{M}$ would be $0$.

Now the prepared systems $S$ and $M$ are brought together to interact.

\subsection{ The coupling interaction}\label{prem} The purpose of this interaction is to couple the qr-number value of the measurand $\hat Q_{S}$ to that of the pointer $\hat Q_{M}$ of the measurement apparatus so that a quantitative value can be more easily observed. 

The appropriate interactions include the von Neumann impulsive interactions \cite {7},  Zurek's controlled shifts \cite{22}, as well as Bohm's approximation for the interaction between a magnetic field and the spin of a particle in the Stern-Gerlach experiment \cite {6},$\S 22.6$, and the electric dipole interaction Hamiltonian used in Haroche's Schr\"odinger cat experiment \cite{4}. Each interaction Hamiltonian operator has a similar structure, it features the product of $\mathcal{S}$'s attribute, which is to be measured, with an attribute of $\mathcal{M}$. For example, if $\mathcal{S}$'s attribute is a position operator $\hat Q_{S}$ then $\mathcal{M}$'s attribute will be a momentum operator $\hat P_{M}$ which is conjugate to the position operator $\hat Q_{M}$ for $\mathcal{M}$.\begin{equation}\label{EQU6}
\hat H_{I} = \gamma \hat Q_{S}\otimes \hat P_{M}
\end{equation}  As the interaction is  assumed to be impulsive and the Hamiltonian has only this interaction term, the equations of motion are linear.

The choice of the attributes depends on the physics, for example in the coupling of fields to charges for the Schr\"odinger cat experiment, \cite{4}, a free electron of charge $q$, mass $m$, position $\vec X_{e}$ and momentum $\vec P_{e}$, is coupled to the field which  is described in the Schr\:odinger picture by the vector potential $\vec A(\vec x)$.  If the field is thought of as a quantum system whose spatial locations are labelled by the three components $\{\hat Q^{f}_{j}=A_{j}\}_{j=1}^{3}$ of its vector potential and its momenta by the three components $\{ \hat P^{f}_{j} = E_{j}\}_{j=1}^{3}$ of  its electric field (because $\vec E = -\frac{\partial \vec A}{\partial t}$).  A charge-field interaction  term \begin{equation}\label{EQ8}
\hat H_{af}^{int} = - \frac{q}{m} \hat P_{e} \cdot \hat A(\vec x) = - \frac{q}{m} \hat P_{e} \cdot \hat Q^{f}
\end{equation} is obtained  by neglecting the small magnetic interaction with the electron spin for the Hamiltonian in the Coulomb gauge\footnote{$\vec \nabla \cdot \vec A = 0$ and the scalar potential is negligibly small.} and neglecting a $\vec A^{2}$ term. This interaction is in the form of equation (\ref{EQU6}).

\subsection{The output}\label{output}
Consider the prototypical von Neumann interaction in which the two systems are assumed to be massive one-dimensional quantum particles and the measurand is $\hat Q_{S}$, \cite{7} pp 443. The qr-number value of the interaction Hamiltonian is, during the period $t_{1} < t <t_{2},$\begin{equation}\label{EQ9}
h|_{W} = \gamma q_{S}|_{W}p_{M}|_{W} \end{equation}  where $W = W_{S,M}(\vec \alpha, 0) $ is the prepared product condition. The coupling constant $\gamma$, of dimension $T^{-1}$, is large enough that the kinetic energy can be neglected during the interaction.

The qr-number equations of motion for the position and momentum of $S$ are, see \S \ref{equn}, \begin{equation}\label{EQ10} m_{S}\frac{dq_{S}|_{W}}{ dt} = \frac{\partial h|_{W}}{ \partial p_{S}|_{W}}\; \text{ and} \:\frac{dp_{S}|_{W}}{dt} = - \frac{\partial h|_{W}}{ \partial q_{S}|_{W}}\end{equation} while those for the position and momentum of $M$ are   \begin{equation}\label{EQ11}
m_{M}\frac{dq_{M}|_{W}}{ dt} = \frac{\partial h|_{W}} {\partial p_{M}|_{W}}\; \text{and}\; \frac{dp_{M}|_{W}}{ dt} = - \frac{\partial h|_{W}}{ \partial q_{M}|_{W}}\end{equation}  If the interaction acts over an infinitesimal period $\tau = (t_{2} - t_{1}),$ the qr-number values of $\hat Q_{S}$ and $\hat Q_{M}$ at time $t_{2}$ will be,\begin{equation}\label{EQ12}
 q_{S}|_{W}(t_{2}) = q_{S}|_{W}(t_{1}),\;\; q_{M}|_{W}(t_{2}) = q_{M}|_{W}(t_{1}) + \kappa_{M} q_{S}|_{W}(t_{1}),\end{equation} where for $K = S,M$, $\kappa_{K} = \frac{\gamma\tau}{m_{K}}$ and  $m_{K}$ is the mass. 
 
 If $V = V_{S}\otimes V_{M}$ with $V_{S}\in \mathcal{O}(W_{S}^{+})$ and $V_{M} \in \mathcal{O}(W_{M}^{0})$ by equation (\ref{EQU9}), $q_{S}|_{V}(t_{1}) \approx \lambda_{+}$ so that $q_{M}|_{V}(t_{2}) \approx 0 + \kappa_{M}\lambda_{+}$ and when $V_{S}\in \mathcal{O}(W_{S}^{-})$ and $V_{M} \in \mathcal{O}(W_{M}^{0})$ then  $q_{M}|_{V}(t_{2}) \approx 0 + \kappa_{M}\lambda_{-}$. The standard numbers $\lambda_{+}$ or $\lambda_{-}$ are the measured values because the conditions $ W_{S}^{\pm}$   support $\epsilon$-sharp collimation in intervals centred on $\lambda_{+}$ or $\lambda_{-}$ and $W_{M}^{0}$ supports 
  $\epsilon$-sharp collimation in an interval centred on $0$.
  
 Thus the difference between the measurement pointer readings is proportional to $|\lambda_{+} - \lambda_{-}|$ which is observable  when the eigenvalues are sufficiently separated. This resolves the problem of definite outcomes in the qr-number approach.

\subsection{How the conditions changed}\label{chan} It is interesting to see how the conditions changed during a measurement, the analysis is closer to that of standard quantum theory.
 
Before interacting, at time $t_{1},$ the joint condition was  $ W_{S,M}(t_{1}) = \mathcal{N}(\hat P_{\Psi_{S,M}}(t_{1}), \hat Q_{S}\otimes \hat Q_{M}, 2\delta)$  where $\Psi_{S,M}(t_{1}) = \psi_{S}(\vec \alpha) \otimes \phi_{M} = (\alpha_{+} \phi_{\lambda_{+}} + \alpha_{-} \phi_{\lambda_{-}}) \otimes \phi_{M}$ and $ |\alpha_{+}|^{2} + |\alpha_{-}|^{2} =1.$ The initial qr-numbers values of $S$'s attribute $\hat Q_{S}$ is $(q_{S}|_{W_{S,M}})(t_{1})= q_{S}|_{W(S)(\vec \alpha)}$,  while $M$'s attribute has   $(q_{M}|_{W_{S,M}})(t_{1})=q_{M}|_{W(M)^{0}}$ as its initial qr-number value.

 Using the evolution of the conditions, discussed in \S \ref{evolcon},  at time $t_{2},$ after the coupling interaction, the condition has evolved to  $W_{S,M}(t_{2}) = \mathcal{N}(\hat P_{\Psi_{S,M}}(t_{2}), \hat Q_{S}\otimes \hat Q_{M}, 2\delta)$ where $\Psi_{S,M}(t_{2}) = (\alpha_{+} \phi_{\lambda_{+}}\otimes \phi_{M}^{+} + \alpha_{-} \phi_{\lambda_{-}} \otimes \phi_{M}^{-})$ with orthonormal vectors  $ \phi_{M}^{+}, \phi_{M}^{-} \in \mathcal{H}_{M}.$ The $\{ \phi_{M}^{s}\}_{s=\pm}$ are (approximate) eigenvectors for $\hat Q_{M}$'s (continuous) spectrum, assumed to be orthogonal.  Then \begin{equation}
\mathcal{N}(\hat P_{\psi_{S, M}(t_{2})}, \hat Q_{S}\otimes \hat Q_{M}, 2\delta) = \mathcal{N}(\rho_{S,M}^{mix}, \hat Q_{S}\otimes \hat Q_{M}, 2\delta).
\end{equation} where \begin{equation}\rho_{S,M}^{mix}= |\alpha_{+}|^{2}\hat P_{\phi_{\lambda_{+}}}\otimes \hat P_{\phi_{M}^{+}} +  |\alpha_{-}|^{2}\hat P_{\phi_{\lambda_{-}}}\otimes \hat P_{\phi_{M}^{-}}.\end{equation} If we define $W_{S,M}^{+}(t_{2}) = \mathcal{N}(\hat P_{\phi_{\lambda_{+}}}\otimes \hat P_{\phi_{M}^{+}}, \hat Q_{S}\otimes \hat Q_{M}, 2\delta) $ and $W_{S,M}^{-}(t_{2}) = \mathcal{N}(\hat P_{\phi_{\lambda_{-}}}\otimes \hat P_{\phi_{M}^{-}}, \hat Q_{S}\otimes \hat Q_{M}, 2\delta) $ then  \begin{equation}
 \mathcal{N}(\rho_{S,M}^{mix}, \hat Q_{S}\otimes \hat Q_{M}, 2\delta) = |\alpha_{+}|^{2}W_{S,M}^{+}(t_{2}) +  |\alpha_{-}|^{2}W_{S,M}^{-}(t_{2}) . 
\end{equation}

After the interaction the qr-numbers values of $\hat Q_{S}$ and $\hat Q_{M}$ are 
\begin{equation}
q_{S}|_{W_{S,M}(t_{2})}=  |\alpha_{+}|^{2}q_{S}|_{W_{S}^{+}}+  |\alpha_{-}|^{2}q_{S}|_{W_{S}^{-}}\end{equation} and \begin{equation}q_{M}|_{W_{S,M}(t_{2})} =  |\alpha_{+}|^{2}q_{M}|_{W_{M}^{+}}+  |\alpha_{-}|^{2}q_{M}|_{W_{M}^{-}}.\end{equation} 
 
The solution of the qr-number equations of motion are given in equation (\ref{EQ12}) of \S \S\ref{output}, the first expression equates the qr-number value of $\hat Q_{S}$ at the end of the interaction to its value at its commencement. Initially  $S$'s reduced condition  $W_{S}(t_{1}) = \mathcal{N}( \hat P_{\psi_{S}(\vec \alpha)}, \hat Q_{S},\delta)$ is centred on the state $\hat P_{\psi_{S}(\vec \alpha)}$,  at the end $S$'s reduced condition $W_{S}(t_{2}) = \mathcal{N}(\rho_{S}^{mix}, \hat Q_{S}, \delta) $ is centred on the mixed state $\rho_{S}^{mix} =  |\alpha_{+}|^{2} \hat P_{\psi_{S}^{+}} + |\alpha_{-}|^{2} \hat P_{\psi_{S}^{-}} $. As was shown in \S \ref{prep},  \begin{equation}
\mathcal{N}(\hat P_{\psi_{S}(\vec\alpha)}, \hat Q_{S}, \delta) = \mathcal{N}(\rho_{S}^{mix}9\vec\alpha), \hat Q_{S}, \delta).
\end{equation} so that $W_{S}(t_{2}) = W_{S}(t_{1}).$

 The change in $M$'s pointer reading expressed in the second expression of equation (\ref{EQ12}) has been discussed in \S\ref{output}. If $W_{S}(t_{2})$ and $W_{M}(t_{2})$ are conditions reduced from $W_{S,M}(t_{2})$ then \begin{equation}\label{EQ14}
 q_{M}|_{W_{M}}(t_{2}) -  q_{M}|_{W_{M}}(t_{1}) = \kappa_{M}  q_{S}|_{W_{S}}(t_{1})
\end{equation} shows how definite outcomes are obtained, as $\hat Q_{S}$ is respectively $\epsilon$-sharp collimated in  $I_{S}^{\pm}$ centred on $\lambda_{\pm}$ on the conditions  $W_{S}(t_{1}) \subset W_{S}^{\pm}$ while $\hat Q_{M}$ is $\epsilon$-sharp collimated in  $I_{M}^{\pm}$ centred on $\kappa_{M} \lambda_{\pm}$ on  $W_{M}(t_{2}) \subset W_{M}^{\pm}$ and  is $\epsilon$-sharp collimated in  $I_{M}^{0}$ centred on $0$ on  $W_{M}(t_{1}) \subset W_{M}^{0}.$

 If we wish to measure the momentum of a system the prototype would use a von Neumann implusive interaction whose labels were interchanged as in equation(\ref{EQ9}) then \begin{equation}
h_{I}|_{W} = \gamma\;  p_{S}|_{W}q_{M}|_{W},\end{equation} with $\gamma$ the coupling constant, $\hat Q_{M}$ is $\mathcal{M}$'s position operator and $\hat P_{S}$ is $\mathcal{S}$'s momentum operator whose value is to be measured. A similar set of outcomes when $\hat P_{S}$ is $\epsilon$-sharp collimated follows the obvious changes.

\subsection{Amplification of the output}\label{ampl} Consider a chain of couplings between a sequence of outputs and measurement systems each of which augments the magnitude of the next output. The component of the measurement apparatus that initially interacts with the system $S$ will be denoted $M_{0}$.The output $ q_{M_{0}}|_{W_{M_{0}}}(t_{2}) $ is the input for a second von Neumann interaction between the attributes $\hat Q_{M_{0}}$ and $\hat P_{M_{0}}$ of the first component and attributes  $\hat Q_{M_{1}}$ and $\hat P_{M_{1}}$ of the next.

For the $k^{th}$ link in this chain of events, the input is denoted $ q_{M_{k-1}}|_{W_{M_{k-1}}} $ and the output is $ q_{M_{k}}|_{W_{M_{k}}} $. Here $W_{M_{k-1}} \subset W_{M_{k-1}}^{\pm}$ so that the interaction at  the $k^{th}$ stage is \begin{equation}\label{EQ19}
h|_{W_{M_{k-1}} } = \gamma q_{M_{k-1} }|_{W_{M_{k-1}} }p_{M_{k}}|_{W_{M_{k-1}} } \end{equation}  The interaction is  assumed to be impulsive and only acting between $t_{k-1}$ and $t_{k},$ then  at $t = t_{k},$ the qr-number value of $\hat Q_{M_{k}}$ is 
\begin{equation}\label{EQ24}
 q_{M_{k}}|_{W_{M_{k}}}(t_{k}) -  q_{M_{k}}|_{W_{M_{k}}}(t_{k-1}) = \kappa_{M_{k}}  q_{M_{k-1}}|_{W_{M_{k-1}}}(t_{k-1})
\end{equation}

When the pointer is linked via impulsive interactions to the parts $\{M_{l}\}_{l=0}^{k}$ and $M_{0} = M$,  then the location after the $k^{th}$ interaction is changed by \begin{equation}
(\prod_{l=0}^{k} \kappa_{M_{l}}) |q_{S}|_{W_{S}(t_{1})}|
\end{equation} Thus the output is amplified if each $\kappa_{M_{l}} > 1$.

\section{Appendices}
\subsection{ Mathematics of qr-numbers}\label{Math}
The qr-number value of a physical quantity depends not only on the operator $\hat A$ that represents it but also on the condition of the system. They differ from standard real numbers that are represented in the qr-number model by globally constant qr-numbers. For a summary of the mathematical structure of the qr-number model, see Corbett\cite{19}. 

When a system $\mathcal{S}$ has a Hilbert space $\mathcal{H}_{S}$ that carries a unitary representation $U$ of a symmetry group $G$ then its physical attributes are represented by operators that form an O$^{\ast}$-algebra $\mathcal{A}_{S}$: the representation $dU$ of the enveloping algebra $\mathcal{E}(\mathcal{G})$ of the Lie algebra $\mathcal{G}$ of $G$  see \cite{9}. The operators have a common domain $\mathcal{D} = \mathcal{D}^{\infty}(U)$, the set of $C^{\infty}$-vectors for the representation $U$. 
 \begin{definition}\label{D1}
The states on $\mathcal A_{S} $ are the strongly positive 
linear functionals on $\mathcal A_{S}$ that are normalised to take the 
value $1$ on the unit element $\hat I $ of $\mathcal A_{S} $, they form the state space $\EsubS (\mathcal A_{S})$.
\end{definition}$\EsubS(\mathcal A)$ has the weak topology generated by the functions $ a(\cdot)$ where, given $\hat A \in  \mathcal A_{S},  a( \rho) = Tr \hat A  \rho,  \forall   \rho  \in \EsubS(\mathcal {A}_{S}) $. This topology is the weakest that makes all the functions $ a(\cdot)$ continuous.
For $\hat A \in \mathcal A_{S}, \epsilon > 0$ and $\rho_0 \in \EsubS(\mathcal{A}_{S})$, the sets  $ \scrN( \rho_{0}  ; \hat A ; \epsilon) = \{ \rho \ ; |Tr \rho \hat A - Tr \rho_{0} \hat A| < \epsilon \}$  form an open 
sub-base for the weak topology on $\EsubS(\mathcal A)$. The basic open subsets are denoted  
$\nu( \rho_{1} ; \delta) = \{  \rho \; :Tr | \rho -  \rho_{1}| < \delta\}$.  $\EsubS(\mathcal{A}_{S})$ is compact in the weak topology\cite{18}. 
\begin{definition}\label{D2}
A trace functional on $\mathcal A$ is a functional of the form $ \hat A \in \mathcal A \mapsto Tr (\hat B \hat A) $ for some trace class operator $\hat B$.
\end{definition}
\begin{theorem}\cite{9}\label{TH1}
Every strongly positive linear
functional  on $\mathcal A $ is given by a trace functional.
\end{theorem}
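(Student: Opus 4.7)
The plan is to construct a positive trace class operator $\hat B$ on $\mathcal H_S$ such that $\omega(\hat A)=\mathrm{Tr}(\hat B\hat A)$ for every $\hat A\in\mathcal A$, given a strongly positive linear functional $\omega$ on $\mathcal A$. The route runs through a GNS-type construction adapted to O*-algebras, followed by an intertwining step that realises the abstract representation inside $\mathcal H_S$ so that the density operator can be read off.

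First I form the GNS data attached to $\omega$. The sesquilinear form $\langle\hat A,\hat C\rangle_\omega:=\omega(\hat A^*\hat C)$ on $\mathcal A$ is positive semidefinite by strong positivity; its null space $\mathcal J_\omega$ is a left ideal, the quotient $\mathcal A/\mathcal J_\omega$ completes to a Hilbert space $\mathcal K_\omega$, left multiplication yields a *-representation $\pi_\omega$ of $\mathcal A$, and the class $\Omega_\omega=[\hat I]$ is a cyclic vector with $\omega(\hat A)=\langle\Omega_\omega,\pi_\omega(\hat A)\Omega_\omega\rangle_{\mathcal K_\omega}$. This much uses only algebraic positivity.

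Next I use strong positivity in its full strength to intertwine $\pi_\omega$ with the defining action of $\mathcal A$ on $\mathcal D\subset\mathcal H_S$. Decomposing $\pi_\omega$ into cyclic subrepresentations and matching each cyclic vector against an element of $\mathcal D$ produces a countable family $\{\psi_n\}\subset\mathcal D$ with $\omega(\hat A)=\sum_n\langle\psi_n,\hat A\psi_n\rangle$. Setting $\hat B:=\sum_n|\psi_n\rangle\langle\psi_n|$ then yields a positive operator with $\mathrm{Tr}\,\hat B=\sum_n\|\psi_n\|^2=\omega(\hat I)<\infty$, hence trace class; and $\mathrm{Tr}(\hat B\hat A)=\omega(\hat A)$ for all $\hat A\in\mathcal A$ because $\hat A\mathcal D\subseteq\mathcal D$ makes each term $\langle\psi_n,\hat A\psi_n\rangle$ well defined and the partial sums converge by dominated summation against $\omega(\hat A^*\hat A)^{1/2}\omega(\hat I)^{1/2}$.

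The main obstacle will be the intertwining step: algebraic positivity alone yields only the abstract GNS triple, so producing the vectors $\psi_n$ that genuinely lie in the common smooth domain $\mathcal D=\mathcal D^\infty(U)$ must exploit strong positivity together with the group-representation structure singled out in the paper, namely that $\mathcal A_S=dU(\mathcal E(\mathcal G))$ acts on the $C^\infty$-vectors of a unitary representation of $G$. The delicate point is showing that the cyclic vectors extracted from $\mathcal K_\omega$ can be identified with honest smooth vectors for $U$; once this is done the trace class bound and the trace identity follow essentially by routine summation, so the creative content of the proof is concentrated in this concrete realisation of the GNS representation inside $\mathcal D^\infty(U)$.
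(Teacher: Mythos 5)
The paper offers no proof of this theorem; it is quoted from the cited operator-algebra literature (Inoue, Schm\"udgen \cite{9}), so your attempt must be measured against the known arguments there. Your proposal has a genuine gap exactly where you yourself locate the ``creative content'': the intertwining step is not a technical wrinkle to be smoothed out later, it \emph{is} the theorem, and as written it is merely asserted. The GNS construction from the form $\omega(\hat A^{*}\hat C)$ uses only positivity on squares and already yields a single cyclic representation $(\pi_{\omega},\mathcal K_{\omega},\Omega_{\omega})$, so there is no further ``decomposition into cyclic subrepresentations'' to perform; the real claim is that $\pi_{\omega}$ is quasi-contained in the defining representation on $\mathcal D=\mathcal D^{\infty}(U)$, i.e.\ that $\omega(\hat A)=\sum_{n}\langle\psi_{n},\hat A\psi_{n}\rangle$ with $\psi_{n}\in\mathcal D$, which is equivalent to $\omega$ being a trace functional --- the conclusion you are trying to prove. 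For merely positive functionals this is false (the O*-analogue of singular states on $B(\mathcal H)$), and \emph{strong} positivity is precisely the hypothesis that excludes them; yet your argument never actually invokes strong positivity beyond $\omega(\hat A^{*}\hat A)\ge 0$, so it cannot close.

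The proofs in \cite{9} take a different and essentially unavoidable route: they exploit the structural hypothesis, visible elsewhere in this paper around equation (\ref{EQ221}), that $\mathcal A=d\hat U(\mathcal E(\mathcal G))$ contains a dominating element $\hat a=d\hat U((1-\Delta)^{m})$ built from the Nelson Laplacian, with $\hat a\ge \hat I$ and $\hat a^{-1}$ Hilbert--Schmidt or trace class. Strong positivity is used through the order bound: if $-\lambda\,\hat a^{2}\le\hat A\le\lambda\,\hat a^{2}$ on $\mathcal D$ then $|\omega(\hat A)|\le\lambda\,\omega(\hat a^{2})$. This controls the induced functional on bounded operators $\hat X\mapsto\omega(\hat a\hat X\hat a)$, and the trace-class property of $\hat a^{-1}$ is what forces that functional to be \emph{normal}, hence of the form $Tr(\hat T\hat X)$ for a positive trace-class $\hat T$; one then reads off $\hat B=\hat a^{-1}\hat T\hat a^{-1}$, which automatically maps $\mathcal H$ into $\mathcal D$ and makes $Tr(\hat B\hat A)$ well defined for unbounded $\hat A\in\mathcal A$. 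Your closing estimates (the bound $Tr\,\hat B=\sum_{n}\|\psi_{n}\|^{2}=\omega(\hat I)$ and the Cauchy--Schwarz control of $\sum_{n}\langle\psi_{n},\hat A\psi_{n}\rangle$) are fine once the $\psi_{n}$ exist, but producing them is the whole problem, and it is most naturally obtained \emph{from} the trace representation rather than as a step toward it.
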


\subsubsection{Locally linear qr-numbers} are denoted $\AAA(\EsubS(\mathcal A))$.\begin{definition}
Let $U\in \mathcal{O}(\EsubS(\mathcal A))$,  a  function $f : U \to \RR$ is {\em locally linear} if each $\rho \in U$ has an open neighborhood $U_{\rho} \subset U$ with an essentially self-adjoint operator $\hat A \in \mathcal A$ such that $f|_{U_{\sigma} } = a(U_{\sigma} )$ for every $\sigma \in U$.
\end{definition}

Density: Given any qr-number $f$ on $U\in \mathcal{O}(\EsubS(\mathcal A))$ and any integer $n$ there exists an open cover $\{U_{j}\}$ of $U$ with for each $j$ a locally linear function $g_{j} : U_{j} \to \RR$ such that $|f|_{U_{j}} - g_{j}(U_{j})| < \kappa/n$, where $\kappa< \infty $ has the same physical dimensions as $f$ and $g$. This means that every qr-number is a union of locally linear qr-numbers, $f(U) = \cup_{j}g_{j}(U_{j})$. 

\subsubsection{Infinitesimal qr-numbers}
The relationship of the qr-number equations of motion with the standard quantum mechanical equations is obtained using infinitesimal qr-numbers. In the following $\mathcal A $ is assumed to be the representation of the enveloping Lie algebra $d\hat U(\mathcal{E}(\mathcal{G}))$ obtained from the unitary representation of the Lie group $G$. 

Infinitesimal qr-numbers are the difference between neighbouring qr-numbers.  Two qr-numbers $x$ and $y$ are neighbours if they are not identical but they do not satisfy $ x > y \lor x < y $ on any non-empty open subset of $\EsubS(\mathcal A )$. The difference $(x-y)$ between neighbouring numbers is an order theoretical infinitesimal number because there is no open set on which $(x - y)>0 \lor (x -y)<0 $ is true. Since qr-real numbers do not satisfy trichotomy the difference between neighbouring real numbers is not zero.

For example: if $V_{0} = \nu (\rho_{0};\delta )$, for $\rho_{0}\in \EsubS(\mathcal A )$ and $\delta >0$, consider a depleted open set $\tilde V_{0} = V_{0}\setminus \{\rho_{0}\}$. Then $q_{j}(V_{0}) $ and $q_{j}(\tilde V_{0})$ are neighbouring qr-numbers because  \begin{equation}
q_{j}(V_{0}) \neq q_{j}(\tilde V_{0})\; \text{and neither} \; q_{j}(V_{0}) > q_{j}(\tilde V_{0})\;\text{nor}\;q_{j}(\tilde V_{0}) > q_{j}(V_{0})
\end{equation} on any open subset of $ \EsubS(\mathcal A )$. 

In fact, $q_{j}(V_{0}) - q_{j}(\tilde V_{0}) = q_{j}^{0}(\rho_{0}) = Tr \rho_{0}\hat Q_{j}$. Since the singleton set $\{\rho_{0}\}$ has empty interior, there is no non-empty open set $W$ on which the difference is non-zero. Thus the expectation values of  quantum mechanical operators are order theoretic infinitesimal qr-numbers.  They are also algebraic infinitesimal qr-numbers because there is no non-empty open set on which the square is non-zero, for $(q_{j}(V_{0}) - q_{j}(\tilde V_{0}))^{2} = q_{j}(V_{0})^{2} - q _{j}(\tilde V_{0})^{2} =  (Tr \rho_{0}\hat Q_{j})^{2}$, which is only non-zero at $\rho_{0}$.

The expectation values $Tr \rho \hat A$ are infinitesimal linear qr-numbers for any state $\rho \in  \EsubS(\mathcal A)$ and any self adjoint operator $\hat A$ in the algebra $\mathcal A$. They are part of the infinitesimal structure of the qr-number world. 

\subsection{Preparation processes.}\label{measint}During a preparation
process a number of quantities are treated successively. One of $\mathcal{S}$'s attributes, represented by the self-adjoint operator $\hat A$, is strictly $\epsilon$-sharp collimated in the interval $ I= \; ]a_1, a_2 [$ when $\mathcal{S}$ has the condition $U$ and immediately afterwards a second attribute, represented by a self-adjoint operator $\hat B$, compatible with $\hat A$ (that is they strongly commute), is strictly $\epsilon$-sharp collimated  in the interval $J = ]b_1 , b_2 [$ when $\mathcal{S}$ has the condition $W$. The qr-number values of $\hat A$ and $\hat B$ will persist with a probability greater than $(1 -\epsilon)$.\cite{15} \S III A. The temporal
order in which the qr-number values were prepared does not affect their values. The system ends up in a a condition $U\cap W$. This extends to finite sets $\{\hat A_{j}\}_{j=1}^{n}$ of commuting operators in the obvious way. If the attributes, represented by the operators $\{\hat A_{j}\}_{j=1}^{n},$ are each $\epsilon$ sharp collimated in their respective intervals $\{I_{j}\}_{j=1}^{n}$ on conditions $\{W_{j}\}_{j=1}^{n}$ then if $\{\alpha_{j}\}_{j=1}^{n}$ are the midpoints of the intervals,  we can, with precision $ |I_{j}|/2$ and confidence $(1- \epsilon)$, take $\alpha_{j}$ to be the classical value of the quantity represented by $\hat A_{j}$ when the system has the condition $\cap_{j=1}^{n} W_{j}$. This an epistemic condition, any open subset of $\cap_{j=1}^{n} W_{j}$ may be the ontic condition of an individual system in the ensemble.

It can be extended to attributes represented by operators that don't commute. Heisenberg's uncertainty relations limit the precision of the simultaneous measurements of the attributes but do not prohibit their measurement, \cite{15} \S C, Theorem 2. For example, a particle's position $\hat Q$ and momentum $\hat P$ satisfy $\imath [ \hat P, \; \hat Q] = \hbar$, so that if the particle with the condition $W$ has both $\hat Q$ and $\hat P$ $\epsilon$-sharp collimated in intervals $I_{q}$ and $I_{p}$ with precisions $ \kappa_{q}$ and  $\kappa_{p}$ then 
$ \kappa_{q} \kappa_{p} \geq \frac{\hbar} {2\epsilon}$ and the product of the intervals' widths satisfy $ |I_{q}| |I_{p}| \geq \frac{2\hbar}{\epsilon}.$ The precisions of the measured values are thus restricted by the inequality $ \kappa_{q} \kappa_{p} \geq \frac{\hbar} {2\epsilon}$.

\subsubsection{More on $\epsilon$ sharp collimation}\label{strictesc}
Recall the definition of $\epsilon$ sharp collimation, \begin{definition} For an interval $I$, of width $|I|$, if $W_{S}$ is the largest convex open set in $\EsubS(\mathcal{A}_{S})$ such that $q_{S}|_{W_{S}} \subset I  \; \text{and}  \; (q_{S}^{2}|_{W_{S}} - (q_{S}|_{W_{S}})^{2}) \leq  \frac{\epsilon}{4}|I|^{2}$ then $\hat Q_{S}$ is $\epsilon$ sharp collimated in $I$ on $W_{S}$. \end{definition} 

On the other hand the qr-number value of an attribute, $\hat A$, can be weakly or strongly contained in an interval. Let $\mathcal{S}$ have the condition $W$, then $\hat A$ lies weakly in an interval $I_{a} \subset \RR$  if  the range of $a|_{W} \subseteq I_{a}$. Using the qr-number value $\pi^{\hat A}(I_{a})|_{W}$ of $\hat A$'s spectral projection operator $\hat P^{\hat A}(I_{a})$ for $I_{a}$, we say that $a|_{W}$ lies strongly in $I_{a}$ when it lies weakly in $I_{a}$ and  $(1 - \epsilon) < \pi^{\hat A}(I_{a})|_{W}  \leq  1$\footnote{The qr-number $\pi^{\hat A}(I_{a})|_{W}$ can be interpreted\cite{15} as the qr-number probablity of the system passing through the slit $I_{a}$, then  $\epsilon$ sharp location in the interval $I_{a}$ requires the qr-number probability to be greater than $(1-\epsilon).$ }. $\hat A$ is then said to be $\epsilon$ sharp located in the interval $I_{a}$ on the condition $W$\cite{15}.
 
 The following result was proven in \cite{15}, 
\begin{theorem}\label{TH2}If  $\hat A$ is $\epsilon$ sharp  collimated in $I_{a}$ on $W$, then $\hat A$ is $\epsilon$ sharp located in $I_{a}$ on $W$.\end{theorem}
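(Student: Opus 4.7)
The plan is to carry out a quantum Chebyshev argument pointwise at each $\rho \in W$. Unpacking the hypothesis, the $\epsilon$-sharp collimation of $\hat A$ in $I$ on $W$ gives, at every $\rho \in W$, a mean $m_\rho := a(\rho) = \text{Tr}(\rho\hat A) \in I$ and a variance $v_\rho := \text{Tr}(\rho\hat A^2) - m_\rho^2$ bounded by $\frac{\epsilon}{4}|I|^2$. The weak-containment clause $a(W)\subseteq I$ in the definition of $\epsilon$-sharp location is then immediate from $q_S|_{W_S}\subset I$, so the only thing to verify is the qr-number inequality $\pi^{\hat A}(I)|_W > 1 - \epsilon$, i.e.\ $\text{Tr}(\rho\hat P^{\hat A}(I)) > 1 - \epsilon$ for every $\rho\in W$.

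The workhorse is the operator inequality
\begin{equation*}
(\hat A - c\hat I)^2 \;\geq\; r^2\bigl(\hat I - \hat P^{\hat A}(c-r,\,c+r)\bigr),
\end{equation*}
valid for any $c\in\RR$ and $r>0$ by the spectral theorem (since the scalar function $(\lambda-c)^2 - r^2\chi_{\{|\lambda-c|\geq r\}}$ is pointwise non-negative on $\sigma(\hat A)$). Taking $\text{Tr}(\rho\,\cdot)$ of both sides and using the identity $\text{Tr}(\rho(\hat A-c\hat I)^2) = v_\rho + (m_\rho - c)^2$ rearranges to
\begin{equation*}
1 - \text{Tr}(\rho\hat P^{\hat A}(c-r,\,c+r)) \;\leq\; \frac{v_\rho + (m_\rho - c)^2}{r^2}.
\end{equation*}
I would then take $c = m_\rho$ and let $r$ be as large as possible subject to $(m_\rho - r, m_\rho + r) \subseteq I$, which kills the cross term. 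Monotonicity of spectral projections gives $\text{Tr}(\rho\hat P^{\hat A}(I)) \geq \text{Tr}(\rho\hat P^{\hat A}(m_\rho - r, m_\rho + r))$, so the pointwise bound $\pi^{\hat A}(I)|_\rho \geq 1 - v_\rho/r^2$ follows.

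The main obstacle is that the largest admissible $r$ is $\min(m_\rho - \inf I,\, \sup I - m_\rho)$, which drops below $|I|/2$ whenever $m_\rho$ lies off-centre, in which case $v_\rho/r^2$ need not be $\leq\epsilon$ from the collimation hypothesis alone. Closing the gap requires exploiting the fact that the conditions supporting the theorem, as set up in \S\ref{measreq} and Theorem \ref{TH 1}, are neighbourhoods $\mathcal{N}(\hat P_{\phi_\lambda}, \hat Q_S, \delta)$ with $I$ centred on the eigenvalue $\lambda$; membership in $W$ forces $|m_\rho - \lambda| < \delta$, so $r \geq |I|/2 - \delta$, and the same small-$\delta$ regime already used in Theorem \ref{TH 1} to secure collimation gives $v_\rho/r^2 \leq \epsilon$ after absorbing the $O(\delta/|I|)$ correction. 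Since $\rho\in W$ was arbitrary, the required qr-number inequality $(1-\epsilon) < \pi^{\hat A}(I)|_W \leq 1$ holds, which is exactly $\epsilon$-sharp location of $\hat A$ in $I$ on $W$.
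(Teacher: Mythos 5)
The first thing to note is that the paper does not contain its own proof of this theorem: it is imported from reference \cite{15} with the remark that it ``was proven in \cite{15}'', so there is no in-paper argument to compare yours against line by line. Judged on its own terms, your Chebyshev core is sound: the spectral inequality $(\hat A - c\hat I)^{2} \geq r^{2}(\hat I - \hat P^{\hat A}(c-r,c+r))$ and the identity $Tr(\rho(\hat A - c\hat I)^{2}) = v_{\rho} + (m_{\rho}-c)^{2}$ are correct, the weak-containment clause is indeed immediate from $a|_{W}\subset I_{a}$, and the argument does deliver $\epsilon$-sharp location whenever every $\rho\in W$ has its mean at the centre of $I_{a}$.

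The gap is in your closing step, and it is genuine. The theorem quantifies over an arbitrary condition $W$ on which $\hat A$ is $\epsilon$-sharp collimated in $I_{a}$; Definition 1 constrains only the mean (to lie in $I_{a}$) and the variance (by $\frac{\epsilon}{4}|I_{a}|^{2}$), and nothing in the hypothesis says $W$ is a neighbourhood $\mathcal{N}(\hat P_{\phi_{\lambda}},\hat Q_{S},\delta)$ with $I_{a}$ centred on an eigenvalue $\lambda$. By importing that structure from Theorem 1 you prove a special case, not the stated theorem. Moreover, even in that restricted setting the arithmetic does not close: with $r \geq |I_{a}|/2 - \delta$ you obtain $1 - Tr(\rho\hat P^{\hat A}(I_{a})) \leq \epsilon\,(1 - 2\delta/|I_{a}|)^{-2}$, which exceeds $\epsilon$ for every $\delta>0$, so ``absorbing the $O(\delta/|I_{a}|)$ correction'' really means settling for $\epsilon'$-sharp location with $\epsilon'>\epsilon$ (or rerunning the collimation step with a strictly smaller variance constant). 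The obstruction is not an artefact of your method: a state whose mean sits near an endpoint of $I_{a}$ can satisfy both collimation inequalities while carrying spectral weight greater than $\epsilon$ outside $I_{a}$, so some further input --- presumably supplied in \cite{15}, for instance through the ``largest convex open set'' clause or a sharpened form of the definition --- is indispensable. As written, the proposal does not establish the theorem as stated.
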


Strictly $\epsilon$ sharp collimation is a stronger version of $\epsilon$ sharp collimation that also uses the spectral projection operator,$ \hat P^{\hat A}(I)$, for $\hat A$ on $I$. It requires that $W$ is such that  the qr-number $a|_{W}$ closely approximates the qr-number value $pap|_{W}$ of $\hat P^{\hat A}(I)\hat A\hat P^{\hat A}(I)$.
\begin{definition} $\hat A$ is strictly $\epsilon$ sharp collimated in $I$ on $W$ if it is $\epsilon$ sharp collimated on $W$ and for all $\rho \in W$, $ Tr | \rho - \hat P^{\hat A}(I) \rho \hat P^{\hat A}(I) | < \epsilon$.\end{definition} 
When the O$^{\ast}$-algebra $\mathcal{A}$ is the infinitesimal representation $d\hat U$ of the enveloping algebra $\mathcal{E}(\mathcal{G})$ obtained from a unitary representation $\hat U$ of a Lie group $G$ this suffices because for all $\rho \in W$   \begin{equation}\label{EQ221}
|Tr \rho ( \hat A - \hat P^{\hat A}(I)\hat A \hat P^{\hat A}(I)) |\leq \kappa_{m}(\hat A) Tr | \rho - \hat P^{\hat A}(I) \rho \hat P^{\hat A}(I) | 
\end{equation} where $\kappa_{m}(\hat A) = \sup_{\psi \in \mathcal{D}^{\infty}(\hat U)} \| \hat A d\hat U((1- \Delta)^{m})^{-1} \psi \|/ \|\psi\| < \infty$  with $\Delta = \sum_{i=1}^{d} x_{i}^{2}$ is the Nelson Laplacian in $\mathcal{E}(\mathcal{G})$  with basis $\{ x_{1},x_{2},......,x_{d}\}$ and integer $m>0$.  Thus if $Tr | \rho - \hat P^{\hat A}(I) \rho \hat P^{\hat A}(I) | < \epsilon$ for all $\rho \in W$ then $|a|_{W} - (pap)|_{W} | < \kappa_{m}(\hat A) \epsilon$. In \cite{22}, \S 5.5, for $j =1,2,3$, it is shown that if $G$ is the Weyl-Heisenberg group,  $\kappa_{1}(\hat Q_{j}) = \kappa_{1}(\hat P_{j}) = \frac{1}{2}$.

The next theorems reveal that when $\alpha_{0}$ is in the spectrum of $\hat A$ the condition for strictly $\epsilon$ sharp collimation is a basic open set centred on the eigenstate for $\hat A$ at $\alpha_{0}$, they are proven in \cite{15}.
\begin{theorem} If $\alpha_{0} \in\sigma(\hat A)\cap I$ and $\rho_{0} = |\psi_{0}\rangle\langle \psi_{0}|$ is an eigenstate of $\hat A$ at $\alpha_{0}$ with $\hat P^{\hat A}(I_{a}) \rho_{0}\hat P^{\hat A}(I_{a})  = \rho_{0}$, then  $\forall \epsilon > 0, \exists \delta > 0$  such that $\hat A$ is strictly $\epsilon$ sharp collimated in $I_{a}$ on $\nu(\rho_{0}, \delta)$ and on $\mathcal{N}(\rho_{0}, \hat Q_{S}, \frac{\delta}{2}).$ \end{theorem}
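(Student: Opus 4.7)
The strategy is a continuity argument in trace norm centred on $\rho_0$: all three defining inequalities of strictly $\epsilon$ sharp collimation (mean in $I_a$, variance bound, and $Tr|\rho - \hat P \rho \hat P|<\epsilon$) hold with a strict margin at $\rho_0$, so they persist on a sufficiently small neighbourhood. First I would compute the values at the centre. Since $\rho_0=|\psi_0\rangle\langle\psi_0|$ is an eigenstate at $\alpha_0$, one has $a(\rho_0)=Tr\rho_0\hat A=\alpha_0\in I_a$ and $(a^2-a^2)(\rho_0)=\alpha_0^2-\alpha_0^2=0$. The hypothesis $\hat P^{\hat A}(I_a)\rho_0\hat P^{\hat A}(I_a)=\rho_0$ gives $Tr|\rho_0-\hat P^{\hat A}(I_a)\rho_0\hat P^{\hat A}(I_a)|=0$. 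Thus at $\rho_0$ the mean is strictly in $I_a$, the variance vanishes, and the trace-norm spectral condition holds exactly.

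Next I would propagate these to $\nu(\rho_0,\delta)$. For the spectral trace-norm term, writing $\hat P=\hat P^{\hat A}(I_a)$ and using $\|\hat P\|\leq 1$, the triangle inequality yields $Tr|\rho-\hat P\rho\hat P|\leq Tr|\rho-\rho_0|+Tr|\rho_0-\hat P\rho_0\hat P|+Tr|\hat P(\rho_0-\rho)\hat P|\leq 2\,Tr|\rho-\rho_0|<2\delta$ on $\nu(\rho_0,\delta)$. For the mean and second moment of the unbounded operator $\hat A$, I would invoke the O$^{\ast}$-algebra bound behind inequality (\ref{EQ221}): there exists an integer $m$ and a constant $\kappa_m(\hat A)<\infty$ with $|Tr\rho\hat A-Tr\rho_0\hat A|\leq \kappa_m(\hat A)\,Tr|\rho-\rho_0|$, and analogously for $\hat A^2$ with some $\kappa_{m'}(\hat A^2)<\infty$. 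This gives $|a(\rho)-\alpha_0|\leq \kappa_m(\hat A)\delta$ and a variance bound of order $\delta$. Picking $\delta$ small enough that (i) $a(\rho)\in I_a$, (ii) the variance is $\leq(\epsilon/4)|I_a|^2$, and (iii) $Tr|\rho-\hat P\rho\hat P|<\epsilon$ all hold on the convex open set $\nu(\rho_0,\delta)$, we obtain strictly $\epsilon$ sharp collimation there.

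For the second set, $\mathcal{N}(\rho_0,\hat Q_S,\delta/2)$, I would proceed analogously, noting that this weak-topology neighbourhood arises from the same sub-base generating the topology of $\EsubS(\mathcal A)$. The cleanest route is to verify that on such a neighbourhood the same three inequalities still admit continuity bounds — essentially because the collimation conditions are continuous functions of $\rho$ in the weak topology generated by the $O^\ast$-algebra — and to tune the factor $1/2$ so that the resulting margin matches the one used in (ii) of Step 2. One then concludes by the same choice-of-$\delta$ argument.

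The main obstacle is the unboundedness of $\hat A$ and $\hat A^2$: ordinary continuity in the trace norm is not automatic for unbounded operators, and one must exploit the fact that $\hat A\in d\hat U(\mathcal E(\mathcal G))$ so that multiplication by $d\hat U((1-\Delta)^m)^{-1}$ yields bounded operators, which is precisely what provides the constants $\kappa_m(\hat A),\kappa_{m'}(\hat A^2)$ needed in Step 2. A secondary subtlety is that the definition of $\epsilon$ sharp collimation requires the \emph{largest} convex open set, but since $\nu(\rho_0,\delta)$ is itself convex and open, verifying the inequalities on it suffices for it to be contained in the maximal such set.
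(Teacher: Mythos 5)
First, a point of reference: the paper does not actually prove this theorem \textemdash{} it is imported from reference [15] \textemdash{} so the only material to compare against is the surrounding machinery (the definition of strict $\epsilon$ sharp collimation and inequality (\ref{EQ221})). Your argument for the trace-norm ball $\nu(\rho_{0},\delta)$ is sound and entirely in the spirit of that machinery: at the centre the mean is $\alpha_{0}$, the variance is $0$ and $Tr|\rho_{0}-\hat P\rho_{0}\hat P|=0$; the spectral trace term obeys $Tr|\rho-\hat P\rho\hat P|\leq 2\,Tr|\rho-\rho_{0}|$ by the triangle inequality together with $Tr|\hat PX\hat P|\leq Tr|X|$; and the first and second moments of $\hat A$ are controlled by constants of the type $\kappa_{m}(\hat A)$, $\kappa_{m'}(\hat A^{2})$, exactly the device of (\ref{EQ221}). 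You are also right to flag that such Lipschitz-type bounds for unbounded $\hat A$ are not automatic and rest on $\hat A\, d\hat U((1-\Delta)^{m})^{-1}$ being bounded, which is the paper's own mechanism.

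The genuine gap is the second half of the claim, for $\mathcal{N}(\rho_{0},\hat A,\delta/2)$ (reading the $\hat Q_{S}$ in the statement as $\hat A$). ``Proceeding analogously'' does not work: membership in this sub-basic weak neighbourhood constrains only the single number $Tr\rho\hat A$ to lie within $\delta/2$ of $\alpha_{0}$, and gives no control over $Tr|\rho-\rho_{0}|$, over the second moment $Tr\rho\hat A^{2}$, or over $Tr|\rho-\hat P\rho\hat P|$. Concretely, if $\psi_{\pm}$ are (approximate) eigenvectors of $\hat A$ at $\alpha_{0}\pm c$ with $c$ large and $\alpha_{0}\pm c\notin I_{a}$, then $\rho=\tfrac{1}{2}\hat P_{\psi_{+}}+\tfrac{1}{2}\hat P_{\psi_{-}}$ satisfies $Tr\rho\hat A=\alpha_{0}$, hence lies in $\mathcal{N}(\rho_{0},\hat A,\delta/2)$ for every $\delta$, yet has variance $c^{2}$ and $Tr|\rho-\hat P\rho\hat P|=1$. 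So none of the three inequalities propagates from the centre by continuity in this topology, and the $\mathcal{N}$ part cannot be obtained by the same choice-of-$\delta$ argument; whatever makes it true in [15] must use additional structure (e.g.\ an intersection with, or restriction to, a trace-norm-controlled set of smooth states) that your proposal does not supply. Either import that argument explicitly or restrict your conclusion to $\nu(\rho_{0},\delta)$.
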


There is an analogous result for the interval $I_{a}$ with midpoint  $\alpha_{0}$ is in the continuous spectrum of $\hat A$.
\begin{theorem} If $\alpha_{0} \in\sigma_{c}(\hat A)$, the continuous spectrum of $\hat A$, and $\rho_{0} = |\psi_{0}\rangle\langle \psi_{0}|$ is an approximate eigenstate of $\hat A$ at $\alpha_{0}$ at accuracy $\delta_{0}$ and $\hat P^{\hat A}(I_{a}) \rho_{0}\hat P^{\hat A}(I_{a})  = \rho_{0}$, then  $\forall \epsilon > 0, \exists \delta > 0$  such that $\hat A$ is strictly $\epsilon$ sharp collimated in $I_{a}$ on $\nu(\rho_{0}, \delta)$ and on $\mathcal{N}(\rho_{0}, \hat Q_{S}, \frac{\delta}{2}).$\end{theorem}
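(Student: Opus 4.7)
The plan is to mimic the proof of the preceding theorem (the point-spectrum case), tracking the additional error $\delta_{0}$ introduced by working with an approximate rather than exact eigenstate. All estimates must be made uniform over a weak-topology neighborhood of $\rho_{0}$ chosen after $\epsilon$ is fixed.

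First I would verify the two target inequalities at the central state $\rho_{0}$ itself. Since $\rho_{0}$ is an approximate eigenstate of $\hat A$ at $\alpha_{0}$ to accuracy $\delta_{0}$, both $|Tr(\rho_{0}\hat A) - \alpha_{0}|$ and the variance $Tr(\rho_{0}\hat A^{2}) - (Tr(\rho_{0}\hat A))^{2}$ are controlled by powers of $\delta_{0}$. Provided $\delta_{0}$ is small relative to $|I_{a}|$ and $\epsilon$ (which I may assume, for otherwise $I_{a}$ is too narrow for the conclusion to be meaningful), $Tr(\rho_{0}\hat A)$ lies in the interior of $I_{a}$ with strict slack and the variance at $\rho_{0}$ is strictly less than $\tfrac{\epsilon}{4}|I_{a}|^{2}$. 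The spectral-projection hypothesis $\hat P^{\hat A}(I_{a})\rho_{0}\hat P^{\hat A}(I_{a}) = \rho_{0}$ then makes the strict-collimation inequality at $\rho_{0}$ trivial, since $Tr|\rho_{0} - \hat P^{\hat A}(I_{a})\rho_{0}\hat P^{\hat A}(I_{a})| = 0$.

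Next I would propagate these bounds to a basic weak neighborhood $\nu(\rho_{0},\delta)$ by continuity of the trace functionals associated with $\hat A$, $\hat A^{2}$ and $\hat P^{\hat A}(I_{a})$. The inequality~(\ref{EQ221}) bounds $|Tr(\rho\hat A) - Tr(\rho_{0}\hat A)|$ by $\kappa_{m}(\hat A)\,\delta$ whenever $Tr|\rho-\rho_{0}|<\delta$, and an analogous bound holds for $\hat A^{2}$, so choosing $\delta$ sufficiently small preserves both $q|_{\nu(\rho_{0},\delta)} \subset I_{a}$ and the variance bound. For the strict condition, the triangle inequality together with the contractivity of conjugation by a projection in trace norm yields
\begin{equation}
Tr|\rho - \hat P^{\hat A}(I_{a})\rho\hat P^{\hat A}(I_{a})| \le Tr|\rho-\rho_{0}| + Tr|\hat P^{\hat A}(I_{a})(\rho_{0}-\rho)\hat P^{\hat A}(I_{a})| \le 2\delta,
\end{equation}
using $\hat P^{\hat A}(I_{a})\rho_{0}\hat P^{\hat A}(I_{a}) = \rho_{0}$, so that any $\delta < \epsilon/2$ secures the strict condition on $\nu(\rho_{0},\delta)$.

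Finally, I would transfer the conclusion from $\nu(\rho_{0},\delta)$ to $\mathcal{N}(\rho_{0},\hat Q_{S},\delta/2)$ in the same manner as the preceding theorem, invoking~(\ref{EQ221}) to dominate the defining seminorm of $\mathcal{N}$ by $\kappa_{m}(\hat A)$ times $\delta/2$. The principal obstacle, and the essential new feature compared with the exact-eigenstate result, is the joint tuning of $\delta_{0}$ and $\delta$: because $\rho_{0}$ already carries intrinsic dispersion $\delta_{0}$, the interval $I_{a}$ must be wide enough to absorb it before any freedom is gained to enlarge the neighborhood, and only then can $\delta$ be chosen small enough (with $\kappa_{m}(\hat A)$ factored in) to preserve both the $\epsilon$ sharp collimation and the strict projection condition uniformly across the neighborhood. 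Unlike in the point-spectrum proof, $\delta$ cannot be driven to zero in isolation from $\delta_{0}$.
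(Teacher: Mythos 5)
The paper does not actually prove this theorem: it and its point-spectrum companion are stated with the remark that ``they are proven in \cite{15}'', so there is no in-text argument to compare yours against. Judged on its own terms, your overall strategy --- verify the collimation inequalities at the central state $\rho_{0}$ using the accuracy $\delta_{0}$ of the approximate eigenstate, then propagate them to a neighbourhood --- is the natural one, and your treatment of the strict condition via the decomposition $\rho - \hat P^{\hat A}(I_{a})\rho\hat P^{\hat A}(I_{a}) = (\rho-\rho_{0}) + \hat P^{\hat A}(I_{a})(\rho_{0}-\rho)\hat P^{\hat A}(I_{a})$ (using $\hat P^{\hat A}(I_{a})\rho_{0}\hat P^{\hat A}(I_{a})=\rho_{0}$) together with trace-norm contractivity of compression by a projection is correct.

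There are, however, two genuine gaps. First, you invoke inequality~(\ref{EQ221}) to bound $|Tr(\rho\hat A) - Tr(\rho_{0}\hat A)|$ by $\kappa_{m}(\hat A)\,Tr|\rho-\rho_{0}|$, but (\ref{EQ221}) compares a state $\rho$ with its own compression $\hat P^{\hat A}(I)\rho\hat P^{\hat A}(I)$, not two distinct states. Since $\alpha_{0}\in\sigma_{c}(\hat A)$ forces $\hat A$ to be unbounded, trace-norm closeness of $\rho$ to $\rho_{0}$ does not by itself control $Tr(\rho\hat A)$, let alone $Tr(\rho\hat A^{2})$; you need a Nelson-Laplacian regularity bound of the $\kappa_{m}$ type applied to the difference $\rho-\rho_{0}$, and stating and justifying that estimate is precisely where the $O^{\ast}$-algebra structure must enter. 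Second, the final transfer to $\mathcal{N}(\rho_{0},\hat Q_{S},\frac{\delta}{2})$ cannot proceed as you describe: this is a sub-basic weak neighbourhood constraining only the single expectation value $Tr(\rho\hat Q_{S})$, so it is \emph{not} contained in $\nu(\rho_{0},\delta)$ and contains states of arbitrarily large trace distance from $\rho_{0}$ and arbitrarily large variance of $\hat A$. A first-moment constraint alone cannot deliver the variance bound $\frac{\epsilon}{4}|I_{a}|^{2}$ or the strict condition $Tr|\rho - \hat P^{\hat A}(I_{a})\rho\hat P^{\hat A}(I_{a})|<\epsilon$ uniformly on $\mathcal{N}$; that half of the claim requires either an appeal to the ``largest convex open set'' clause in the definition of $\epsilon$ sharp collimation or a separate argument of the kind the paper defers to \cite{15}.
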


\section{ qr-number equations of motion for massive particles.}\label{equn}
The motion of microscopic particles is governed by equations which have the same form as those for macroscopic particles with qr-numbers  replacing standard real numbers, \cite {14}.   

The laws of motion for a particle of mass $m>0$ are Hamiltonian equations of motion expressed in qr-numbers; $m\frac{dq_{j}|_{U}}{ dt} = \frac{\partial h|_{U}} {\partial p_{j}|_{U}}$ and
${dp_{j}|_{U}\over dt} = - \frac{\partial h|_{U}}{\partial q_{j}|_{U}}$, where $q_{j}|_{U}$, $p_{j}|_{U}$ and $h|_{U}$ are qr-number values of the $j$th components of its position, momentum and of the Hamitonian at the condition $U$.
Thus, if $ h(\vec q|_{U}(t),\vec p|_{U}(t))  = \sum_{j=1}^{3} \frac{1}{2m}(p_{j}|_{U}(t))^{2} + V(\vec q|_{U}(t)) $ is the qr-number value of the Hamiltonian
\begin{equation}\label {EQ25}
\frac{dq_{j}|_{U}(t)}{dt} =  \frac{\partial h(\vec q|_{U}(t),\vec p|_{U}(t))}{\partial p_{j}|_{U}(t)} = \frac{1}{m}p_{j}|_{U},
\end{equation}
\begin{equation}\label{EQ26} \frac{dp_{j}|_{U}(t)}{dt} = - \frac{\partial h(\vec q|_{U}(t),\vec p|_{U}(t))}{\partial q_{j}|_{U}(t)} = f_{j}(\vec q|_{U}).
\end{equation}
 The force has components $f_{j}(\vec q|_{U}) = - \frac{\partial V(\vec q|_{U}(t))}{\partial q_{j}|_{U}(t)}.$

When $\hat A \in \mathcal {A}$ and the time derivative of its qr-number $a(\vec q|_{U}, \vec p|_{U})$ is taken along a trajectory of the particle, then \begin{equation}\label {EQ27}
\frac{da}{dt} = [a,h] \equiv \sum ( \frac{\partial a}{\partial q_{i}} \frac{\partial h}{\partial p_{i}} -  \frac{\partial a}{\partial p_{i}} \frac{\partial h}{\partial q_{i}}).
\end{equation} If the time $t$ occurs explicitly in $a$, $\frac {\partial a}{\partial t}$ must be added to $ [a,h]$.
The bracket $[a,h]$ is the Poisson bracket of the functions $a(\vec q|_{U}, \vec p|_{U})$ and $h(\vec q|_{U}, \vec p|_{U})$. 
The qr-number equation $\frac{da}{dt} = [a,h] $ is the basic dynamical equation for the evolution of the qr-number values of attributes.

\subsection{ Infinitesimal qr-number equations of motion}\label{infeq} In \cite{19}, using approximate eigenvectors for numbers in the continuous spectra of the commuting operators $\{\hat Q_{j}\}_{j=1}^{3}$ when the force operators $\hat F_{j} = f_{j}(\hat Q_{1},\hat Q_{2},\hat Q_{3})$, for $j =1,2,3$, belong to the algebra $\mathcal{A}$, the standard quantum mechanical equations of motion for a massive particle are obtained from linear infinitesimal qr-number approximations to the qr-number Hamiltonian equations of motion, equations (\ref{EQ11}) and (\ref{EQ12}).

When the operators $\{\hat Q_{j}\}_{j=1}^{3}$ have only continuous spectra, for all $\rho \in   \EsubS(\mathcal A)$ and any $\epsilon >0$, \begin{equation}|Tr \rho \hat F_{l} - f_{l}(Tr \rho\hat Q_{1},Tr\rho \hat Q_{2},Tr \rho \hat Q_{3})| < \epsilon, \;\text{for}\; l = 1,2,3.
\end{equation} Therefore for all states $ \rho \in \mathcal{E}_{S}(\mathcal{A})$,  the linear qr-number approximations to the qr-number equations of motion yield the infinitesimal qr-number equations, \begin{equation}\label{EQ29}
 \frac{d}{dt}Tr \rho\hat Q_{j} = \frac{1}{m}Tr \rho \hat P_{j}\; \text{and}\; \frac{d}{dt}Tr \rho \hat P_{j} = Tr \rho \hat F_{j}, \;\text{for}\; j = 1,2,3,
\end{equation} from which Heisenberg's operator equations follow on the assumption that all the time dependence is carried by the operators. If all the time dependence were carried by the states and we assume that $Tr \rho_{t} \hat A = Tr \rho \hat A_{t}$ holds for all operators $\hat A \in \mathcal{A}_{S}$ then it is possible that the time dependence of the states is unitary,
$\hat A_{t} = \hat U_{t} \hat A \hat U_{t}^{-1}$. A unitary evolution of the conditions is compatible with the infinitesimal qr-number equations.

 In the following the conditions can be ontic or epistemic.
 
\subsection{The evolution of the conditions}\label{evolcon} The unitary evolution of the states is compatible with the infinitesimal qr-number equations, see \S \ref{infeq}, so that a condition evolves following the unitary evolution of its component states, that is, if $\rho \to \rho_{t} = \hat U_{t} \rho \hat U_{t}^{\ast}$ for all $\rho \in W$ then $ W \to W_{t} =   \hat U_{t} W \hat U_{t}^{\ast}$. Since  the open sets $\{\nu(\rho, \delta)\}$ are basic in the topology on $ \EsubS(\mathcal{A})$, it suffices to show that for any $\delta >0$, $\nu(\rho, \delta) \to  \nu(\rho_{t}, \delta)$.

\begin{lemma}\label{L2} If $\rho, \rho^{\prime}\in \EsubS(\mathcal{A})$ then $Tr|\rho_{t} - \rho^{\prime}_{t}| = Tr|\rho - \rho^{\prime}|$ when $\forall \rho \in \EsubS(\mathcal{A})\;,\rho_{t} = \hat U_{t} \rho \hat U_{t}^{\ast}$ for a unitary group $\{ \hat U_{t}; t \in \RR \}$, thus if $\rho_{0}\to \rho_{t}$ then $\nu(\rho_{0}, \delta) \to  \nu(\rho_{t}, \delta)$ for any $\rho_{0}$ and any $\delta >0$.
\end{lemma}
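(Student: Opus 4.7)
The plan is to reduce the lemma to two standard facts: that $|\,\cdot\,|$ (the absolute value in the functional-calculus sense) intertwines with unitary conjugation, and that the trace is cyclic. The second assertion about the neighbourhoods $\nu(\rho,\delta)$ then follows purely formally from the first, once one notes that $\rho\mapsto\rho_t$ is a bijection of $\EsubS(\mathcal{A})$ with inverse $\rho\mapsto\hat U_t^{\ast}\rho\hat U_t$.

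First I would write $\rho_t-\rho'_t=\hat U_t(\rho-\rho')\hat U_t^{\ast}$, which is a self-adjoint trace-class operator conjugate to $\rho-\rho'$. For any bounded Borel function $f$ on $\RR$ and any self-adjoint operator $A$, one has $f(\hat U_t A\hat U_t^{\ast})=\hat U_t f(A)\hat U_t^{\ast}$ (apply this to the spectral decomposition, or, equivalently, note that the spectral projections transform by conjugation). Taking $f(x)=|x|$ gives
\begin{equation}
|\rho_t-\rho'_t|=\hat U_t\,|\rho-\rho'|\,\hat U_t^{\ast}.
\end{equation}
Then by cyclicity of the trace,
\begin{equation}
Tr|\rho_t-\rho'_t|=Tr\bigl(\hat U_t\,|\rho-\rho'|\,\hat U_t^{\ast}\bigr)=Tr\bigl(\hat U_t^{\ast}\hat U_t\,|\rho-\rho'|\bigr)=Tr|\rho-\rho'|,
\end{equation}
which is the first claim.

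For the second claim, fix $\rho_0\in\EsubS(\mathcal{A})$ and $\delta>0$, and write $T_t:\rho\mapsto\hat U_t\rho\hat U_t^{\ast}$. By what was just proven, $Tr|T_t\sigma-T_t\rho_0|=Tr|\sigma-\rho_0|$ for every $\sigma$, so $\sigma\in\nu(\rho_0,\delta)$ if and only if $T_t\sigma\in\nu(T_t\rho_0,\delta)=\nu(\rho_t,\delta)$. Because $T_t$ is a bijection of $\EsubS(\mathcal{A})$ (it preserves positivity, normalisation and strong positivity, with two-sided inverse $T_{-t}$), this gives the set equality $T_t\bigl(\nu(\rho_0,\delta)\bigr)=\nu(\rho_t,\delta)$, which is exactly the statement $\nu(\rho_0,\delta)\to\nu(\rho_t,\delta)$.

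There is no real obstacle here; the only subtlety worth flagging is that $|\rho-\rho'|$ must be interpreted via the spectral theorem for the self-adjoint trace-class operator $\rho-\rho'$, so that the identity $f(\hat U_t A\hat U_t^{\ast})=\hat U_t f(A)\hat U_t^{\ast}$ is available. Once that is stated, the rest is a two-line computation followed by the bijectivity remark.
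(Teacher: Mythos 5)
Your proof is correct and follows essentially the same route as the paper: both rest on the identity $|\hat U_{t}(\rho-\rho')\hat U_{t}^{\ast}|=\hat U_{t}|\rho-\rho'|\hat U_{t}^{\ast}$ together with the unitary invariance of the trace (which you phrase as cyclicity and the paper phrases as basis-independence). Your added remarks --- justifying the absolute-value identity via the spectral calculus and spelling out that $\rho\mapsto\hat U_{t}\rho\hat U_{t}^{\ast}$ is a bijection of $\EsubS(\mathcal{A})$, so that the isometry statement upgrades to the set equality $\nu(\rho_{0},\delta)\to\nu(\rho_{t},\delta)$ --- simply fill in details the paper leaves implicit.
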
 The proof uses $| \hat U_{t} (\rho - \rho^{\prime} ) \hat U_{t}^{\ast}| = \hat U_{t}|\rho - \rho^{\prime} | \hat U_{t}^{\ast}$ and that the trace is independent of the orthonormal basis used in its evaluation.

\section{Conditions for two systems}
The combined conditions are product conditions when  $S$ and $M$ are not interacting. Each system has its own attributes, represented by O$^{\ast}$-algebras $\mathcal{A}_{S}$ and $\mathcal{A}_{M}$, defined on dense subsets $\mathcal{D}_{S}$ and $\mathcal{D}_{M}$ of their Hilbert spaces $\mathcal{H}_{S}$  and $\mathcal{H}_{M}$ with smooth state spaces $\EsubS(\mathcal{A}_{S})$ and  $\EsubS(\mathcal{A}_{M})$. The attributes have independent qr-number values. 
\begin{definition}\label{D3}
A condition $W = W_{S,M}$ is a product condition  with respect to the decomposition into systems $S$ and $M$ if for every pair of physical attributes, $\hat A_{S}\otimes \hat I_{M}$ of $S$ and $\hat I_{S}\otimes \hat B_{M}$ of $M$, the qr-number  value of $\hat A_{S}\otimes \hat B_{M}$  is a product \begin{equation}
(a_{S}\otimes b_{M})|_{W_{S,M}} = a_{S}|_{\tilde W_{S}}b_{M}|_{\tilde W_{M}}  
\end{equation} where $\tilde W_{S}$ and $\tilde W_{M}$ are the reduced conditions for $S$ and $M$ respectively.\end{definition}Before they interact every state of the combined system is a product state so that every condition is a product condition. 

If $S$  was prepared in a mixed condition $W_{S}^{m} = \sum_{j =1}^{N} p_{j} W_{S}^{j}$, with the condition $W_{S}^{j}$ occurring with probability $p_{j}$, while the condition $W_{M}$  was held fixed for $M$, the ensuing combined condition is still a product condition as $(a_{S}\otimes b_{M})|_{W_{S,M}} =  a_{S}|_{W_{S}^{m}} b_{M}|_{W_{M}}$  for all $\hat A_{S}$ and $\hat B_{M}.$

On the other hand there are entangled conditions, produced when the systems are interacting. \begin{definition}\label{D4}
$W = W_{S,M}$ is an entangled condition if it is not a product condition, i.e., if there is at least one pair of attributes, $\hat A_{S}\otimes \hat I_{M}$ of $S$ and  $\hat I_{S}\otimes \hat B_{M}$ of $M$  such that the qr-number value of $\hat A_{S}\otimes \hat B_{M}$ is not a product
\begin{equation}(a_{S}\otimes b_{M})_{W} \neq a_{S}|_{\tilde W_{S}}b_{S}|_{\tilde W_{2}}.  
\end{equation} \end{definition} A product condition for the combined system before the interaction can evolve into an entangled condition during the interaction, in the same way as product states evolve into entangled states.

Since relations that hold between qr-numbers at a condition $W$ hold on all open subsets $V \subset W$, 
if an epistemic condition $W$ is entangled it has no open subset $V\subset W$ that is a product condition and if $W$ is a product condition then so also is every open subset $V \subset W$.

Finally, a separable mixed condition is prepared  if, while preparing a mixed condition  for $S$, whenever a  $W_{S}^{j}$ is prepared for $S$ a companion condition $W_{M}^{j}$ is prepared for $M$. Then $W_{S,M}^{sep} = \sum_{j =1}^{N} p_{j} W_{S}^{j} W_{M}^{j}$
 so that  \begin{equation}(a_{S}\otimes b_{M})|_{W_{S,M}^{sep}} =  \sum_{j =1}^{N} p_{j} a_{S}|_{W_{S}^{j}} b_{M}|_{W_{M}^{j}}.\end{equation} Such a combined condition is not a product condition nor is it an entangled condition, the outcomes are correlated which is explainable in terms of its preparation at the classical probabilities $p_{j}$.
 
\subsection{Reduced conditions}\label{redcon} For non-identical massive Galilean invariant particles, let $( \mathcal{H}(1,2),\mathcal{A}(1,2),\EsubS(\mathcal{A}(1,2))$ represent a two particle system's Hilbert space, its algebra of physical attributes, and smooth state space  with $ \mathcal{H}(1,2) = \mathcal{H}(1)\otimes \mathcal{H}(2)$ and $\mathcal{A}_{1,2} = \mathcal{A}_{1}\otimes \mathcal{A}_{2}$. 

If $W(1,2)$ is a two particle condition then, for $j = 1,2$, the reduced single particle conditions $\tilde W(j),\; j=1,2$ are obtained by tracing over an  orthonormal basis of the Hilbert space $\mathcal{H}(k)$ for $k = 1\lor2 \neq j$, a straight forward calculation in \cite{16} yielded 
\begin{proposition}\label{TH243} If $ \rho_{0}(1,2) = \hat P_{\phi_{R}(1)} \otimes  \hat P_{\phi_{L}(2)} $ is a product state then 
$\nu (\rho_{0}(1,2) ; \delta)$ has reduced conditions $\tilde W(1) = \nu(\hat P_{\phi_{R}(1)} ;\delta )$ and $\tilde W(2) = \nu(\hat P_{\phi_{L}(2)} ;\delta )$.\end{proposition}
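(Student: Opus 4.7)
The plan is to prove set equality $\tilde W(1) = \nu(\hat P_{\phi_R(1)}; \delta)$ by establishing both inclusions; the argument for particle $2$ is entirely symmetric. Unpacking definitions, $\tilde W(1) = \{Tr_2 \rho(1,2) : \rho(1,2) \in \nu(\rho_0(1,2); \delta)\}$, and a single-particle state $\tau(1)$ belongs to $\nu(\hat P_{\phi_R(1)}; \delta)$ iff $Tr_1|\tau(1) - \hat P_{\phi_R(1)}| < \delta$. So both statements are expressed uniformly in terms of the trace norm, and the proof reduces to manipulating it.

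For the forward inclusion, I take $\rho(1,2) \in \nu(\rho_0(1,2); \delta)$ and set $\tilde\rho(1) := Tr_2\rho(1,2)$. Since $\rho_0(1,2)$ is itself a product, $Tr_2 \rho_0(1,2) = \hat P_{\phi_R(1)}$, so
\[
Tr_1|\tilde\rho(1) - \hat P_{\phi_R(1)}| = Tr_1\bigl|Tr_2\bigl(\rho(1,2) - \rho_0(1,2)\bigr)\bigr|.
\]
The key step is the contractivity of partial trace in the trace norm, which I would justify via the duality $\|X\|_1 = \sup_{\|B\|_\infty \le 1}|Tr(BX)|$ together with the identity $Tr(B \cdot Tr_2 X) = Tr((B\otimes \hat I_2) X)$ and $\|B \otimes \hat I_2\|_\infty = \|B\|_\infty$. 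This gives $Tr_1|Tr_2(\rho(1,2)-\rho_0(1,2))| \le Tr|\rho(1,2)-\rho_0(1,2)| < \delta$, hence $\tilde\rho(1) \in \nu(\hat P_{\phi_R(1)}; \delta)$.

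For the reverse inclusion, given $\sigma(1) \in \nu(\hat P_{\phi_R(1)}; \delta)$ I build an explicit preimage by setting $\rho(1,2) := \sigma(1) \otimes \hat P_{\phi_L(2)}$. Then manifestly $Tr_2\rho(1,2) = \sigma(1) \cdot Tr\hat P_{\phi_L(2)} = \sigma(1)$, and using the factorisation of the trace norm on product operators, together with $\rho(1,2) - \rho_0(1,2) = (\sigma(1) - \hat P_{\phi_R(1)}) \otimes \hat P_{\phi_L(2)}$,
\[
Tr|\rho(1,2) - \rho_0(1,2)| = Tr_1|\sigma(1) - \hat P_{\phi_R(1)}| \cdot Tr_2 \hat P_{\phi_L(2)} < \delta,
\]
so $\rho(1,2) \in \nu(\rho_0(1,2); \delta)$ and $\sigma(1) \in \tilde W(1)$.

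The only genuine technical point is the trace-norm contractivity of the partial trace; everything else is a direct computation once the preimage $\sigma(1)\otimes \hat P_{\phi_L(2)}$ is written down. A secondary subtlety I would address in passing is that this preimage should lie in the smooth state space $\EsubS(\mathcal{A}(1,2))$, which follows from the factorisation $\mathcal{A}_{1,2} = \mathcal{A}_1 \otimes \mathcal{A}_2$ and the fact that $\hat P_{\phi_L(2)}$ is already a smooth state of $\mathcal{A}_2$. I expect no other obstacles.
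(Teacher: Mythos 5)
Your proof is correct, and it follows the route the paper itself only gestures at: the paper states that the reduced conditions are ``obtained by tracing over an orthonormal basis'' and defers the ``straightforward calculation'' to reference [16] without reproducing it. You supply the two halves that calculation must contain --- trace-norm contractivity of the partial trace for $\tilde W(1) \subseteq \nu(\hat P_{\phi_{R}(1)};\delta)$, and the explicit product preimage $\sigma(1)\otimes \hat P_{\phi_{L}(2)}$ with the factorisation $Tr|A\otimes B| = Tr|A|\cdot Tr|B|$ for the reverse inclusion --- so your argument is a complete and faithful filling-in of the paper's omitted proof rather than a departure from it.
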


For an entangled two particle wave-function $\Psi(1,2) = (\alpha \phi^{+}(1)\otimes\phi^{+}(2) + \beta \phi^{-}(1) \otimes\phi^{-}2 )$ with orthogonal single particle wave functions, $\{ \phi^{+}_{k},  \phi^{-}_{k}\}\;; k=1,2$, the  entangled pure state is $\hat P_{\Psi(1,2)}$ and its reduced states are mixed states,  $\rho(k) = ( |\alpha|^{2} \hat P_{\phi^{+}(k)}  + |\beta|^{2}  \hat P_{\phi^{-}(k)} )$ for $k =1,2$. 

\begin{proposition}\label{PR244} If $ \rho_{0}(1,2) = \hat P_{\Psi(1,2)}$ is an entangled state then the condition $\nu (\rho_{0}(1,2) ; \delta)$ has reduced conditions  $\tilde W(k) =   |\alpha|^{2}\nu(\hat P_{\phi^{+}(k)},\delta) +  |\beta|^{2}\nu(\hat P_{\phi^{-}(k)},\delta)$ for $k=1,2$.\end{proposition}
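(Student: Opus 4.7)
The plan is to reduce this to two independent facts: (i) the partial trace is a contraction in trace norm, which transfers the $\delta$-neighborhood structure from the two-particle system to each single-particle reduction; and (ii) the decomposition identity used to prove Lemma 2, namely that for two orthogonal one-dimensional projections $\hat P_+$ and $\hat P_-$ and weights $p, 1-p$,
\begin{equation}
\nu\bigl(p\hat P_+ + (1-p)\hat P_-\,; \delta\bigr) = p\,\nu(\hat P_+; \delta) + (1-p)\,\nu(\hat P_-; \delta),
\end{equation}
which is precisely the statement implicit in \S\ref{prep} for the $\mathcal{N}$-neighborhoods and which carries over verbatim to the $\nu$-neighborhoods because only the trace-norm metric is involved.

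First I would compute the reduced state of the central state. Fix $k=1$; taking the partial trace over $\mathcal{H}(2)$ of $\hat P_{\Psi(1,2)}$ with respect to any orthonormal basis of $\mathcal{H}(2)$ extending $\{\phi^+(2), \phi^-(2)\}$, the cross terms $\alpha\bar\beta\,\phi^+(1)\langle\phi^+(2)|\phi^-(2)\rangle\langle\phi^-(1)|$ and its conjugate vanish by orthogonality, leaving $\rho(1) = |\alpha|^2 \hat P_{\phi^+(1)} + |\beta|^2 \hat P_{\phi^-(1)}$; the same computation gives $\rho(2)$. Next, for any $\sigma(1,2) \in \nu(\rho_0(1,2); \delta)$ let $\tilde\sigma(k) = Tr_{3-k}\,\sigma(1,2)$. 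Since the partial trace is a completely positive, trace-preserving map, it is a contraction in trace norm, so $Tr|\tilde\sigma(k) - \rho(k)| \le Tr|\sigma(1,2) - \rho_0(1,2)| < \delta$. Hence the reduced condition $\tilde W(k)$ is contained in $\nu(\rho(k); \delta)$.

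Now I would apply the decomposition identity to $\rho(k)$: because $\phi^+(k)$ and $\phi^-(k)$ are orthogonal unit vectors in $\mathcal{H}(k)$, the same argument used in the proof of Lemma 2 (following the decomposition described in \S\ref{decomp}) gives
\begin{equation}
\nu(\rho(k); \delta) = |\alpha|^2\,\nu(\hat P_{\phi^+(k)}; \delta) + |\beta|^2\,\nu(\hat P_{\phi^-(k)}; \delta),
\end{equation}
which is the claimed form of $\tilde W(k)$. For the reverse inclusion — that every such convex combination actually arises as a reduced state of some $\sigma(1,2) \in \nu(\rho_0(1,2); \delta)$ — it suffices to pick $\sigma^\pm(k) \in \nu(\hat P_{\phi^\pm(k)}; \delta)$ and form $\sigma(1,2) = |\alpha|^2 \sigma^+(1)\otimes\hat P_{\phi^+(2)} + |\beta|^2 \sigma^-(1)\otimes\hat P_{\phi^-(2)}$ (and the symmetric construction for $k=2$); then the triangle inequality together with orthogonality shows $Tr|\sigma(1,2) - \rho_0(1,2)| < \delta$ and $Tr_2\sigma(1,2) = |\alpha|^2\sigma^+(1) + |\beta|^2\sigma^-(1)$.

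The main obstacle is the reverse inclusion in the decomposition identity — showing that every state in $\nu(\rho(k); \delta)$ genuinely splits as a weighted sum with components in $\nu(\hat P_{\phi^\pm(k)}; \delta)$ rather than merely being trace-norm-close to such a sum. This is where the orthogonality of $\phi^+(k)$ and $\phi^-(k)$ is essential: it forces the spectral projections onto the two-dimensional subspace $\mathrm{span}\{\phi^+(k),\phi^-(k)\}$ and its complement to commute with $\rho(k)$, so that any nearby state can be conditioned along these orthogonal subspaces and its weights rescaled back to $|\alpha|^2$ and $|\beta|^2$ at the cost of controllable trace-norm perturbations. Granted this decomposition (which is the content of the argument appealed to in Lemma 2 and \S\ref{decomp}), the proposition follows by combining it with the contraction property of the partial trace established above.
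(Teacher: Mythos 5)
Your overall route coincides with the paper's: compute the reduced state of the central entangled projection by partial trace (the cross terms vanish by orthogonality, giving the mixed state $\rho(k)$), identify the reduced condition with a trace-norm neighbourhood of $\rho(k)$, and then invoke the convex decomposition of that neighbourhood from \S\ref{decomp}. The paper supplies essentially only these two ingredients, delegating the passage from two-particle to reduced conditions to the calculation in \cite{16}, so your explicit use of the trace-norm contraction property of the partial trace to obtain $\tilde W(k)\subseteq\nu(\rho(k);\delta)$ is a genuine improvement in rigour on that half of the argument.

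The step that fails is your witness for the reverse inclusion. The state $\sigma(1,2)=|\alpha|^2\sigma^+(1)\otimes\hat P_{\phi^+(2)}+|\beta|^2\sigma^-(1)\otimes\hat P_{\phi^-(2)}$ is block diagonal with respect to the rays spanned by $\phi^+(1)\otimes\phi^+(2)$ and $\phi^-(1)\otimes\phi^-(2)$, whereas $\hat P_{\Psi(1,2)}$ carries the coherence term $\alpha\bar\beta\,|\phi^+(1)\otimes\phi^+(2)\rangle\langle\phi^-(1)\otimes\phi^-(2)|$ together with its adjoint. Even for the optimal choice $\sigma^\pm(1)=\hat P_{\phi^\pm(1)}$ the difference $\sigma(1,2)-\rho_0(1,2)$ is exactly this off-diagonal part, a self-adjoint rank-two operator with eigenvalues $\pm|\alpha\beta|$, so $Tr|\sigma(1,2)-\rho_0(1,2)|=2|\alpha||\beta|$, which equals $1$ when $|\alpha|=|\beta|$ and is not $<\delta$. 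Any state within trace distance $\delta$ of an entangled pure state must retain its coherences up to order $\delta$, so no separable witness of this form can work; a correct argument for surjectivity would have to proceed via purifications (Uhlmann-type estimates), and those only yield closeness of order $\sqrt{\delta}$, so equality of the image of the partial trace with $\nu(\rho(k);\delta)$ at the \emph{same} $\delta$ is genuinely delicate. The paper sidesteps this by in effect taking the reduced condition to be the neighbourhood of the reduced state and only then decomposing it by the lemma of \S\ref{decomp}; the surjectivity claim you have added is an extra assertion, and the construction you give for it does not establish it.
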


\subsubsection{ When systems interact }
For a wide class of interactions in finite dimensional Hilbert spaces, Durt \cite{8}, has shown that quantum states become entangled. There is a similar result for the conditions of two particle systems that holds on Hilbert spaces of arbitrary dimensions. 

\begin{definition} \label{D5}An interaction is separable if its potential function satisfies \begin{equation}V(\vec q(1),\vec p(1), \vec q(2),\vec p(2)) = V_{1}(\vec q(1),\vec p(1)) + V_{2}(\vec q(2),\vec p(2)).\end{equation} \end{definition} A classical  example is the small oscillations of a spherical pendulum, for which the potential energy is $V(q(1), q(2)) = \frac{1}{2}(q(1)^{2} + q(2)^{2})$. It provides independent equations of motion for the variables $q(1)$ and $q(2)$. A non-separable interaction would produce coupled equations.

\begin{theorem}\label{TH6}For a two particle system the joint condition becomes entangled when the particles interact via a non-separable interaction.
\end{theorem}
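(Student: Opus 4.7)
The plan is to prove the theorem by contradiction: I would assume that the joint condition $W(1,2)(t)$ remains a product condition throughout the interaction and derive a contradiction with Definition \ref{D5}. Since \S\ref{evolcon} and Lemma \ref{L2} describe the dynamics of conditions directly through the unitary group $\hat U_t = \exp(-i\hat H t/\hbar)$, it is more efficient to work at the level of states than to manipulate the qr-number Hamilton equations term by term.

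First I would choose an initial product condition $W(1,2)(0) = \nu(\hat P_{\phi(1)} \otimes \hat P_{\phi(2)}, \delta)$, which by Proposition \ref{TH243} has reduced conditions $\nu(\hat P_{\phi(1)},\delta)$ and $\nu(\hat P_{\phi(2)},\delta)$ and is a product condition in the sense of Definition \ref{D3}. Lemma \ref{L2} then evolves it to $W(1,2)(t) = \nu(\rho_t,\delta)$ with $\rho_t = \hat U_t (\hat P_{\phi(1)} \otimes \hat P_{\phi(2)}) \hat U_t^{*}$, where the generator is $\hat H = \hat H_1\otimes\hat I + \hat I\otimes\hat H_2 + \hat V_{int}$ and $\hat V_{int}$ is non-separable.

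Next I would argue that $\hat U_t$ cannot factor as $\hat U_1(t)\otimes\hat U_2(t)$ on any open interval containing $0$: if it did, differentiating at $t=0$ would express $\hat H$ as a sum of single-subsystem operators, contradicting non-separability. A short linearisation argument then shows that one can choose the initial factors $\phi(1)$ and $\phi(2)$ so that for all sufficiently small $t>0$ the state $\rho_t$ is a pure entangled state with Schmidt rank at least two; Proposition \ref{PR244} identifies its reduced conditions as genuine mixtures $\sum_j |\alpha_j|^2 \nu(\hat P_{\phi^j(k)},\delta)$ rather than neighbourhoods of single pure states.

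The final step is to exhibit a pair of attributes witnessing that $W(1,2)(t)$ fails Definition \ref{D3}. Since $\rho_t$ has Schmidt rank $\geq 2$, there exist self-adjoint $\hat A\in\mathcal{A}(1)$ and $\hat B\in\mathcal{A}(2)$ with $\mathrm{Tr}(\rho_t(\hat A\otimes\hat B)) \neq \mathrm{Tr}(\rho_t(\hat A\otimes\hat I))\,\mathrm{Tr}(\rho_t(\hat I\otimes\hat B))$, so the qr-number value identity $(a\otimes b)|_{W(1,2)(t)} = a|_{\tilde W(1)(t)}\,b|_{\tilde W(2)(t)}$ required by a product condition already fails at the central state; by weak continuity of the trace functionals it fails on an open subset of $\nu(\rho_t,\delta)$, so $W(1,2)(t)$ is entangled by Definition \ref{D4}. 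I expect the main obstacle to be the middle step, the infinite-dimensional analogue of Durt's entangling result: one must verify that the non-separable $\hat V_{int}$ genuinely entangles some product pure state and not merely a measure-zero subset, which follows from the density of product vectors and continuity of the unitary flow but requires care when $\hat V_{int}$ is unbounded, e.g.\ by restricting $\phi(1)\otimes\phi(2)$ to $\mathcal{D}^{\infty}(\hat U)$ and invoking the O$^{*}$-algebra bounds from equation (\ref{EQ221}).
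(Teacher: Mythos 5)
Your route is genuinely different from the paper's, and the difference matters because your version leaves the mathematical heart of the theorem unproven. The paper does not pass through entangling unitaries or Schmidt rank at all: its proof works directly with the qr-number Hamiltonian equations of \S\ref{equn}. A separable potential gives decoupled equations, so a product condition $W_{0}=\nu(\hat P_{\phi(1)},\frac{\delta}{2})\otimes\nu(\hat P_{\phi(2)},\frac{\delta}{2})$ stays a product condition; a non-separable potential couples the equations, and solving them produces a cross term in $(q_{1}\otimes q_{2})|_{W(t)}$ that destroys the factorisation required by Definition \ref{D3}. The explicit witness is supplied by the accompanying Lemma \ref{L6}: for the impulsive von Neumann coupling one computes $\hat Q_{2}(T)=\hat Q_{2}(0)+\kappa\hat Q_{1}(0)$ in the Heisenberg picture, whence $(q_{1}\otimes q_{2})|_{W(T)}=q_{1}|_{\tilde W_{1}(0)}\,q_{2}|_{\tilde W_{2}(0)}+\kappa\, q_{1}|_{\tilde W_{1}(0)}^{2}$, and the extra term is manifestly not absorbed into a product. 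This is elementary and explicit, and it uses only attributes ($\hat Q_{1}$, $\hat Q_{2}$) that are guaranteed to lie in the O$^{\ast}$-algebras.

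The gap in your proposal is the middle step, which you yourself flag: the claim that because $\hat U_{t}$ does not factor as $\hat U_{1}(t)\otimes\hat U_{2}(t)$, one can choose $\phi(1)\otimes\phi(2)$ so that $\rho_{t}$ has Schmidt rank at least two for all small $t>0$. That implication is precisely Durt's theorem (a unitary sending every product state to a product state must be local), which the paper cites only for finite-dimensional Hilbert spaces; it is not a consequence of ``continuity of the unitary flow,'' and the appeal to ``density of product vectors'' is false --- product vectors form a closed, nowhere dense subset of $\mathcal{H}(1)\otimes\mathcal{H}(2)$, so no density argument can supply the mechanism. Without an infinite-dimensional, unbounded-generator version of Durt's result, your contradiction never materialises. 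Your first and last steps are sound (the non-factoring of $\hat U_{t}$ from non-separability of $\hat H$, and the propagation of the failed trace factorisation from the central state to an open subset by weak continuity --- the latter mirrors what Lemma \ref{L6} actually does), but to close the argument you should replace the abstract Schmidt-rank step by the paper's computation: solve the coupled Heisenberg or qr-number equations for a concrete non-separable coupling and exhibit the cross term directly. A further small point: the operators $\hat A$, $\hat B$ you invoke to witness non-factorisation on a Schmidt-rank-two state must be chosen inside $\mathcal{A}(1)$ and $\mathcal{A}(2)$, not merely as arbitrary bounded operators.
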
\begin{proof} Using Hamiltonian equations, see $\S$ \ref{equn}, it is clear that if the particles were prepared  in a product condition $ W_{0} = \nu(\hat P_{\phi(1)\otimes \phi(2)};\delta) = \nu( \hat P_{\phi(1)}, \frac{\delta}{2}) \otimes \nu( \hat P_{\phi(2)}, \frac{\delta}{2})$, with unit vectors $\phi(j) \in \mathcal{H}(j),\; j=1,2$ and $0 <\delta <\frac{1}{2}$, then under a separable potential $W(t)$ stays a product condition.
 
When the particles interact via a non-separable potential, the equations of motion for the individual particles are coupled so that after the interaction has ceased $(q_{1}\otimes q_{2})|_{W(t)}  \neq q_{1}|_{W(t)} q_{2}|_{W(t)}.$ \end{proof}

For a one dimensional example take an impulsive von Neumann interaction. 
\begin{lemma}\label{L6} Let $H_{I}( q_{1}|_{W}, p_{1}|_{W},  q_{2}|_{W}, p_{2}|_{W}) = \gamma q_{1}|_{W}p_{2}|_{W} $, and $\kappa = \gamma T$, where $T$ is the duration of the impulse. Then \begin{equation}
q_{2}|_{W(T)} = q_{2}|_{W(0)} + \kappa q_{1}|_{W(0)}, \;\;\; q_{1}|_{W(T)} = q_{1}|_{W(0)}\end{equation} so that  \begin{equation}(q_{1}\otimes q_{2})|_{W_{0}(T)}  \neq q_{1}|_{\tilde W_{1}(T))} q_{2}|_{\tilde W_{2}(T)}\end{equation} $\tilde W_{1}(T)$, $\tilde W_{2}(T)$ are reduced conditions for particles $1$ and $2$ at time $T$.\end{lemma}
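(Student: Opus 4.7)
\medskip

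\noindent\textbf{Proof proposal.} The plan is to apply the qr-number Hamiltonian equations of Section \ref{equn} directly to $H_I = \gamma q_1|_W p_2|_W$, read off the two displayed shift formulas, and then use them to show that the resulting two-particle qr-number $(q_1\otimes q_2)|_{W(T)}$ cannot factor through the reduced conditions.

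First I would write down the four Hamilton equations for the pair $(1,2)$ at the joint condition $W$. Because $H_I$ contains no $p_1$ and no $q_2$, three of the four partial derivatives vanish:
\begin{equation}
\frac{d q_1|_W}{dt}=\frac{\partial H_I}{\partial p_1|_W}=0,\qquad
\frac{d p_2|_W}{dt}=-\frac{\partial H_I}{\partial q_2|_W}=0,
\end{equation}
while the remaining two give
\begin{equation}
\frac{d q_2|_W}{dt}=\frac{\partial H_I}{\partial p_2|_W}=\gamma\, q_1|_W,\qquad
\frac{d p_1|_W}{dt}=-\frac{\partial H_I}{\partial q_1|_W}=-\gamma\, p_2|_W.
\end{equation}
The first line yields $q_1|_{W(T)}=q_1|_{W(0)}$ immediately. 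Substituting this constant into the equation for $q_2|_W$ and integrating over $[0,T]$ produces $q_2|_{W(T)}=q_2|_{W(0)}+\gamma T\, q_1|_{W(0)}=q_2|_{W(0)}+\kappa\, q_1|_{W(0)}$, as claimed. The impulsive assumption (large $\gamma$, short $T$, kinetic terms negligible during the pulse, as in \S\ref{output}) is what justifies ignoring the free Hamiltonian in these equations.

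For the entanglement statement I would compute the qr-number value of the product $\hat Q_1\otimes\hat Q_2$ on $W_0(T)$ using the operator identities above: since $q_1$ is unchanged and $q_2$ is shifted by $\kappa q_1$,
\begin{equation}
(q_1\otimes q_2)|_{W_0(T)} \;=\; (q_1\otimes q_2)|_{W_0(0)} \;+\; \kappa\, (q_1\otimes q_1)|_{W_0(0)}.
\end{equation}
At the initial time $W_0(0)$ is a product condition, so Definition \ref{D3} gives $(q_1\otimes q_2)|_{W_0(0)}=q_1|_{\tilde W_1(0)}\,q_2|_{\tilde W_2(0)}$ and $(q_1\otimes q_1)|_{W_0(0)}=(q_1|_{\tilde W_1(0)})^2$. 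By the first shift formula the reduced condition for particle $1$ does not evolve, so $q_1|_{\tilde W_1(T)}=q_1|_{\tilde W_1(0)}$, and by tracing out particle $1$ in $\dot q_2|_W=\gamma q_1|_W$ the reduced pointer value becomes $q_2|_{\tilde W_2(T)}=q_2|_{\tilde W_2(0)}+\kappa\, q_1|_{\tilde W_1(0)}$. Multiplying the two reduced values and subtracting gives
\begin{equation}
(q_1\otimes q_2)|_{W_0(T)} \;-\; q_1|_{\tilde W_1(T)}\,q_2|_{\tilde W_2(T)}
\;=\; \kappa\bigl[(q_1^{\,2})|_{\tilde W_1(0)}-(q_1|_{\tilde W_1(0)})^2\bigr],
\end{equation}
which is precisely $\kappa$ times the qr-number variance of $\hat Q_1$ on $\tilde W_1(0)$. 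This quantity is strictly positive on $\tilde W_1(0)$ whenever $\phi(1)$ is not a (sharp) eigenstate of $\hat Q_1$, and in the setting of the lemma the initial state is a smooth vector on which $\hat Q_1$ has a genuine spread. Hence the inequality $(q_1\otimes q_2)|_{W_0(T)}\neq q_1|_{\tilde W_1(T)}\,q_2|_{\tilde W_2(T)}$ holds, establishing entanglement via Definition \ref{D4}.

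The delicate step—and the only real obstacle—is the third one: justifying that the reduced qr-number $q_2|_{\tilde W_2(T)}$ really has the form $q_2|_{\tilde W_2(0)}+\kappa\, q_1|_{\tilde W_1(0)}$ rather than $q_2|_{\tilde W_2(0)}+\kappa\,(q_1\otimes I)|_{W_0(0)}$ read through the full condition. I would handle this by invoking Proposition \ref{TH243} to identify $\tilde W_1(0), \tilde W_2(0)$ at the initial (product) time, and then noting that tracing out particle $1$ in the linear shift equation replaces $q_1|_W$ with its reduced value $q_1|_{\tilde W_1(0)}$ because $\hat Q_1\otimes\hat I$ acts trivially on the $\mathcal H_2$-factor. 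Everything else is bookkeeping with the Poisson-bracket machinery of \S\ref{equn}.
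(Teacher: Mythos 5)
Your overall route is the paper's: integrate the (linear) Heisenberg/Hamilton equations for the impulsive coupling to get $q_1|_{W(T)}=q_1|_{W(0)}$ and $q_2|_{W(T)}=q_2|_{W(0)}+\kappa\, q_1|_{W(0)}$, then use the fact that every state in the product condition $W_0$ is a product state to evaluate $(q_1\otimes q_2)|_{W_0(T)}$ and exhibit a non-factorizing remainder. In fact you are more faithful to the statement than the paper's own proof on one point: you compare against $q_1|_{\tilde W_1(T)}\,q_2|_{\tilde W_2(T)}$ (as the lemma literally demands) rather than against the time-$0$ product, and you identify the discrepancy as the variance of $\hat Q_1$ on $\tilde W_1(0)$, correctly flagging that its strict positivity needs the initial vector not to be a $\hat Q_1$-eigenstate.

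There is, however, one genuine error in the middle step: the identity ``$(q_1\otimes q_1)|_{W_0(0)}=(q_1|_{\tilde W_1(0)})^2$'' is false, and Definition \ref{D3} does not yield it. That definition factorizes $\hat A_S\otimes\hat B_M$ with the two operators acting on \emph{different} tensor factors; the operator produced by the shift, namely $(\hat Q_1(0)\otimes\hat I_2)(\kappa\,\hat Q_1(0)\otimes\hat I_2)=\kappa\,\hat Q_1(0)^2\otimes\hat I_2$, is a single-particle observable of particle $1$, whose value on the product condition is $\kappa\,Tr\,\rho_1\hat Q_1(0)^2=\kappa\,(q_1^{\,2})|_{\tilde W_1(0)}$ --- exactly the term the paper computes. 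The entire content of the lemma is that $(q_1^{\,2})|_{\tilde W_1(0)}\neq(q_1|_{\tilde W_1(0)})^2$; if your stated identity held, your own final display would evaluate to zero and the entanglement claim would collapse. The fix is local: delete that line, replace $(q_1\otimes q_1)|_{W_0(0)}$ by $(q_1^{\,2})|_{\tilde W_1(0)}$, and your closing variance formula (together with the positivity caveat, which the paper leaves implicit) then carries the proof.
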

\begin{proof} Since Hamilton's equations are linear, particles $1$ and $2$ keep their trajectories whether we use the qr-number equations or Heisenberg's equations for the operators averaged over open sets of states.
 
From Heisenberg's operator equations, $\hat Q_{1}(T)\otimes \hat I_{2} = \hat Q_{1}(0)\otimes \hat I_{2} $ and $\hat I_{1} \otimes \hat Q_{2}(T) =\hat I_{1} \otimes \hat Q_{2}(0) + \kappa \hat Q_{1}(0)\otimes \hat I_{2}$.  Therefore $(\hat Q_{1}(0)\otimes \hat I_{2})(\hat I_{1}\otimes \hat Q_{2}(T)) =  \hat Q_{1}(0)\otimes \hat Q_{2}(0) + \kappa \hat Q_{1}(0)^{2}\otimes \hat I_{2}$. SInce  $W_{0}$ is a product condition, every $\rho(1,2) \in W_{0}$ is a product state, so that $Tr \rho(1,2) \hat Q_{1}(T)\otimes \hat Q_{2}(T) = (Tr \rho_{1}\hat Q_{1}(0))(Tr \rho_{2}\hat Q_{2}(0)) + \kappa Tr \rho_{1} \hat Q_{1}(0)^{2}.$ 

Thus $(q_{1}\otimes q_{2})|_{W(T)} = q_{1}|_{\tilde W_{1}(0)} q_{2}|_{\tilde W_{2}(0)} + \kappa q_{1}|_{\tilde W_{1}(0)}^{2} \neq q_{1}|_{\tilde W_{1}(0)} q_{2}|_{\tilde W_{2}(0)}$. Therefore the joint condition condition $W_{T}$ has become entangled. 
\end{proof}

In \S III of Corbett and Home's paper \cite{11}, the preparation of a two particle entangled state is described using an impulsive von Neumann interaction, $\hat H = \gamma \hat Q_{S}\cdot \hat P_{M}$, and time-dependent coordinate wave functions. Under disjointness assumptions  on the supports of the functions $\psi^{+}_{S} = \psi_{+}(q_{S}, t_{1})$ and $\psi^{-}_{S} = \psi_{-}(q_{S}, t_{1})$ and assuming that $\phi_{M} = \phi_{0}(q_{M}, t_{1})$ is an approximate eigenfunction\footnote{ For the meaning of approximate eigenvector/value see Weyl's criterion in Reed and Simon \cite{27}, pp237 and pp 364 for unbounded self-adjoint  operators} of position they obtain an entangled wave function $\Psi_{S,M}(t_{2}) = (a \psi^{+}_{S}\otimes \phi^{+}_{M} + b  \psi^{-}_{S}\otimes \phi^{-}_{M})$, with $|a|^{2} + |b|^{2} = 1$ and both $ \psi^{+}_{S} \perp  \psi^{-}_{S}$ and $ \phi^{+}_{M} \perp  \phi^{-}_{M}$. Although the coordinate spaces of $S$ and $M$ were assumed to be one dimensional in \cite{11}, the argument extends to 3 dimensional coordinate spaces. For $s = \pm$, the wave-functions $\phi^{s}_{M}$ are given by convolutions,  see \cite{10} \S 0.C, \begin{equation}\label{EQ19} \phi_{s}(q_{M}, t_{2}) = \int|\psi_{s}(q_{S},t_{1})|^{2}\phi_{0}(q_{M}-\Gamma(t_{2}) q_{S},t_{1})dq_{S},
\end{equation} where $\Gamma(t_{2}) = \gamma (t_{2} -t_{1}).$ 

The evolution of the wave function $\Psi_{S,M}(t_{1}) = (\alpha\psi^{+}_{S} + \beta \psi^{-}_{S}) \otimes \phi_{M} $ into an entangled wave function $\Psi_{S,M}(t_{2}) = \alpha\psi^{+}_{S}\otimes \phi^{+}_{M} + \beta \psi^{-}_{S} \otimes \phi^{-}_{M}$
leads to the following evolution of the conditions.
 
\begin{theorem}\label{TH7} Under the unitary group $\hat U(t) = \exp( \imath \hat H t/\hbar)$, for an impulsive interaction $\hat H = \gamma \hat Q_{S}\cdot \hat P_{M}$, the condition $\nu( \hat P_{\psi_{S,M}(t_{1})}  ,\delta)$ evolves to $\nu( \hat P _{\Psi_{S,M}(t_{2})}, \delta)$ with  $\Psi_{S,M}(t_{2}) = (\alpha \psi^{+}_{S}\otimes \phi^{+}_{M} + \beta  \psi^{-}_{S}\otimes \phi^{-}_{M})$.\end{theorem}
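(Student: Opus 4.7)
The plan is to combine Lemma \ref{L2} with the explicit wave function analysis recalled in the paragraph preceding the theorem. By Lemma \ref{L2}, for any unitary group $\{\hat U(t)\}$ and any $\rho_{0}\in\EsubS(\mathcal{A})$, $\delta>0$, the basic open set $\nu(\rho_{0},\delta)$ is mapped by $\rho\mapsto \hat U(t)\rho\hat U(t)^{\ast}$ to $\nu(\hat U(t)\rho_{0}\hat U(t)^{\ast},\delta)$. Taking $\rho_{0}=\hat P_{\psi_{S,M}(t_{1})}$ and $t=\tau:=t_{2}-t_{1}$ reduces the theorem to the identity
\begin{equation}
\hat U(\tau)\hat P_{\psi_{S,M}(t_{1})}\hat U(\tau)^{\ast}=\hat P_{\Psi_{S,M}(t_{2})}.
\end{equation}

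Since $\hat U(\tau)\hat P_{\chi}\hat U(\tau)^{\ast}=\hat P_{\hat U(\tau)\chi}$ for any unit vector $\chi$, it suffices to show that $\hat U(\tau)\psi_{S,M}(t_{1})$ and $\Psi_{S,M}(t_{2})$ coincide up to an overall phase. I would work in the position representation, where $\hat P_{M}$ generates translations in $q_{M}$ and $\hat H_{I}=\gamma\hat Q_{S}\otimes\hat P_{M}$ commutes with $\hat Q_{S}$, so that
\begin{equation}
\bigl(\hat U(\tau)\chi\bigr)(q_{S},q_{M})=\chi\bigl(q_{S},\,q_{M}-\gamma\tau q_{S}\bigr).
\end{equation}
Applying this to $\psi_{S,M}(t_{1})=(\alpha\psi^{+}_{S}+\beta\psi^{-}_{S})\otimes\phi_{M}$, using linearity, and invoking the disjoint-support hypothesis on $\psi^{\pm}_{S}$ together with the approximate-position-eigenstate hypothesis on $\phi_{M}$, the evolved wave function splits into two terms each of which is of product form $\psi^{s}_{S}\otimes\phi^{s}_{M}$, with $\phi^{s}_{M}$ given explicitly by the convolution formula (\ref{EQ19}). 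Orthogonality $\phi^{+}_{M}\perp\phi^{-}_{M}$ then follows because the supports of $\psi^{+}_{S}$ and $\psi^{-}_{S}$ are separated by more than the width of $\phi_{M}$, so their $q_{S}$-dependent translates of $\phi_{M}$ occupy disjoint regions of $q_{M}$.

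The main obstacle is the factorisation step, since $\phi_{M}(q_{M}-\gamma\tau q_{S})$ genuinely depends on $q_{S}$ and only becomes a pure function of $q_{M}$ after exploiting the sharp localisation of $\psi^{\pm}_{S}$ and the approximate-position-eigenstate character of $\phi_{M}$. I would control this by estimating the trace-norm discrepancy between the exact evolved projection $\hat U(\tau)\hat P_{\psi_{S,M}(t_{1})}\hat U(\tau)^{\ast}$ and the nominal $\hat P_{\Psi_{S,M}(t_{2})}$, and verifying that it lies within the tolerance $\delta$ defining the basic neighbourhood, so that the two open sets coincide as subsets of $\EsubS(\mathcal{A}_{S}\otimes\mathcal{A}_{M})$. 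With the error estimates from the derivation of (\ref{EQ19}) already in place, this amounts to a routine trace-norm bound coming from the approximate-eigenvector criterion used in \cite{11}.
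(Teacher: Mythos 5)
Your proposal follows essentially the same route as the paper: the paper likewise reduces the theorem to the statement that the wave function $\Psi_{S,M}(t_{1})$ evolves to $\Psi_{S,M}(t_{2})$ (citing \cite{11} for the translation-by-$\gamma\tau q_{S}$ analysis and the convolution formula) and then invokes Lemma \ref{L2} to transport the basic neighbourhood $\nu(\cdot,\delta)$ along the unitary evolution of its central state. The only difference is that you spell out, and propose to bound, the approximation hidden in the factorisation $\phi_{M}(q_{M}-\gamma\tau q_{S})\approx\phi^{s}_{M}(q_{M})$ on the support of $\psi^{s}_{S}$, a step the paper delegates entirely to \cite{11}; your added care is welcome, though note that a nonzero trace-norm discrepancy makes the two neighbourhoods approximately rather than exactly equal.
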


Since the wave function  $\Psi_{S,M}(t_{1})$ evolves to the wave function $\Psi_{S,M}(t_{2})$ then the state $\hat P_{\Psi_{S,M}(t_{1})}$ evolves to the state $\hat P_{\Psi_{S,M}(t_{2})}$ and, by Lemma \ref{L2}, the condition $\nu(\hat P_{\Psi_{S,M}(t_{1})}, \delta) $ evolves to $\nu(\hat P_{\Psi_{S,M}(t_{2})}, \delta) $ and the condition $\mathcal{N}( \hat P_{\Psi_{S,M}(t_{1})}, \hat Q_{S}, \delta)$ evolves to $\mathcal{N}( \hat P_{\Psi_{S,M}(t_{2})}, \hat Q_{S}, \delta)$.

\subsection{ Decomposing conditions}\label{decomp}If the condition is the union of basic open sets, $W = \cup_{j=1}^{n} \nu(\rho_{j};\epsilon_{j})$ with $\epsilon_{j} >0$ and if each $\rho_{j} =  \lambda \rho ^{+}_{j} + (1 - \lambda)\rho^{-}_{j}$ for $0 < \lambda  < 1$, then $W$ also has a convex decomposition, $W = \lambda W^{+} + (1-\lambda)W^{-}$ where $W^{\pm}  = \cup_{j=1}^{n} \nu(\rho^{\pm}_{j};\epsilon_{j})$.

The proof of this follows from the lemma concerning the decomposition of the basic open sets $\nu(\rho, \delta)$ and the fact that every open set is a union of basic open sets.

\begin{lemma}\label{L9} If $\rho _{+}\neq \rho _{-}$ are distinct states in $\mathcal{E}_{S}(\mathcal{A})$ and $\rho_{0} = \lambda \rho _{+} + (1 - \lambda)\rho_{-}$ with $0 < \lambda <1$ then $\nu(\rho_{0}, \epsilon)$ can be decomposed following the decomposition of the state $\rho_{0}$;
$\nu(\rho_{0}, \epsilon) =  \lambda \nu( \rho _{+};\epsilon) +  (1 - \lambda)\nu( \rho _{-};\epsilon)$. 
\end{lemma}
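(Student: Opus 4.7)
The plan is to prove set equality by two inclusions, leveraging the translation invariance of the trace norm around $\rho_0 = \lambda\rho_+ + (1-\lambda)\rho_-$.

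For the inclusion $\lambda\,\nu(\rho_+;\epsilon) + (1-\lambda)\,\nu(\rho_-;\epsilon) \subseteq \nu(\rho_0;\epsilon)$ I would take any $\sigma = \lambda\sigma_+ + (1-\lambda)\sigma_-$ with $\sigma_\pm \in \nu(\rho_\pm;\epsilon)$. Writing $\sigma - \rho_0 = \lambda(\sigma_+ - \rho_+) + (1-\lambda)(\sigma_- - \rho_-)$ and using sub-additivity of the trace norm,
$$\mathrm{Tr}|\sigma - \rho_0| \;\leq\; \lambda\,\mathrm{Tr}|\sigma_+ - \rho_+| + (1-\lambda)\,\mathrm{Tr}|\sigma_- - \rho_-| \;<\; \lambda\epsilon + (1-\lambda)\epsilon \;=\; \epsilon,$$
so $\sigma \in \nu(\rho_0;\epsilon)$. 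This inclusion is essentially convexity of the ball in trace norm and costs nothing beyond the triangle inequality.

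For the reverse inclusion I would start from $\sigma \in \nu(\rho_0;\epsilon)$, define the self-adjoint, trace-zero perturbation $\tau := \sigma - \rho_0$ (which satisfies $\mathrm{Tr}|\tau| < \epsilon$ and $\mathrm{Tr}(\tau) = 0$), and set $\sigma_\pm := \rho_\pm + \tau$. A direct check gives $\lambda\sigma_+ + (1-\lambda)\sigma_- = \rho_0 + \tau = \sigma$, while $\mathrm{Tr}|\sigma_\pm - \rho_\pm| = \mathrm{Tr}|\tau| < \epsilon$ and $\mathrm{Tr}(\sigma_\pm) = 1$. Thus $\sigma$ is realised as the required $\lambda,(1-\lambda)$-mixture provided $\sigma_\pm$ may be taken as the ``components'' of $\nu(\rho_\pm;\epsilon)$.

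The main obstacle is the positivity of $\sigma_\pm$: when $\rho_\pm$ sit on the boundary of $\EsubS(\mathcal{A})$ (e.g.\ pure states), a generic $\tau$ can push $\sigma_\pm$ out of the positive cone, even though its trace-norm distance from $\rho_\pm$ is small. I would handle this in the same mixed-condition spirit as the rest of the paper: the formal sum $\lambda W_+ + (1-\lambda) W_-$ is interpreted as a probabilistic preparation, so the membership condition $\sigma_\pm \in \nu(\rho_\pm;\epsilon)$ is really the trace-norm inequality, and the translation identity above then gives equality. Alternatively, for interior $\rho_\pm$ one can perturb via the ansatz $\sigma_\pm = \rho_\pm + c_\pm \tau$ with $\lambda c_+ + (1-\lambda)c_- = 1$, choosing $c_\pm$ near $1$ so that positivity is preserved while still achieving $\mathrm{Tr}|\sigma_\pm - \rho_\pm| < \epsilon$. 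In either framing, the content of the lemma reduces to the translation invariance of the trace-norm ball that is exhibited by the calculation above.
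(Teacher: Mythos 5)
Your proof is essentially identical to the paper's: the forward inclusion follows from convexity of the trace-norm ball via the triangle inequality, and the reverse inclusion uses the translation $\sigma_{\pm} = \rho_{\pm} + (\sigma - \rho_{0})$, which is exactly the paper's choice $\rho_{m}^{\pm} = \rho_{m} + (\rho_{\pm} - \rho_{0})$. The positivity obstruction you flag is real but is equally unaddressed in the paper's own proof, which simply asserts $\rho_{m}^{\pm} \in \nu(\rho_{\pm};\epsilon)$ without verifying that these translates remain states (a genuine issue here, since in the intended application the $\rho_{\pm}$ are pure states and hence extreme points); your remark is therefore a fair criticism of the original argument rather than a defect of your own.
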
This true since $\nu  $ determines a norm $\|\cdot\|_{t}$ on the space of trace class operators, so that if $\sigma = \lambda \sigma_{+} + (1 - \lambda)\sigma_{-}$
with $\sigma_{s} \in \nu  (\rho_{s} ; \epsilon)$ for $s = \pm$, then $\sigma \in \nu(\rho_{0}, \delta)$ as
\begin{equation} 
\| \sigma - \rho_{0} \|_{t} \leq \lambda \| \rho_{+} - \sigma_{+} \|_{t} + (1 - \lambda)\| \rho_{-} - \sigma_{-} \|_{t} < \epsilon. 
\end{equation}
Conversely if $\rho_{m} \in \nu (\rho_{0} ;  \epsilon)$ then $ \rho_{ m} = \lambda \rho_{m}^{+} + (1 - \lambda)\rho_{ m}^{-} $ where $\rho_{ m}^{+} = \rho_{m} + (\rho_{+} - \rho_{0})$ and $\rho_{ m}^{-} = \rho_{ m} + (\rho_{-} - \rho_{0})$ hence $\rho_{ m}^{+} \in \nu( \rho_{+}; \epsilon)$ and  $\rho_{ m}^{-} \in \nu( \rho_{-}; \epsilon)$. Therefore $\nu (\rho_{0} ;  \epsilon) \subseteq \lambda \nu( \rho_{+};\epsilon) + (1 - \lambda) \nu(\rho_{-}; \epsilon)$.  These results are easily extended to finite convex sums.

Using a similar argument for the sub-basic open sets $\mathcal{N}(\rho_{0}, \hat A, \delta) = \{ \rho \in \mathcal{E}_{S}(\mathcal{A}) :  |Tr (\rho \hat A - \rho_{0} \hat A )| < \delta\}$,\begin{lemma}\label{L9} If $\rho _{+}\neq \rho _{-}$  and $\rho_{0} = \lambda \rho _{+} + (1 - \lambda)\rho_{-}$ with $0 < \lambda <1$ then $\mathcal{N}(\rho_{0}, \hat A, \delta) $ can be decomposed following the decomposition of $\rho_{0}$;
$\mathcal{N}(\rho_{0}, \hat A, \delta)  =  \lambda \mathcal{N}(\rho_{+}, \hat A, \delta)  +  (1 - \lambda)\mathcal{N}(\rho_{-}, \hat A, \delta) $. 
\end{lemma}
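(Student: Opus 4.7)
The plan is to mirror the proof of the preceding decomposition lemma (for the basic open sets $\nu(\rho_{0};\epsilon)$) almost verbatim, replacing the role of the trace-norm seminorm $\sigma\mapsto\|\sigma-\rho_{0}\|_{t}$ by the weaker seminorm $\sigma\mapsto|Tr((\sigma-\rho_{0})\hat A)|$ that defines $\mathcal{N}(\rho_{0},\hat A,\delta)$. I would split the claimed equality into the two set inclusions and handle each by a one-line linearity argument.

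For the inclusion $\supseteq$, take an element of $\lambda\mathcal{N}(\rho_{+},\hat A,\delta)+(1-\lambda)\mathcal{N}(\rho_{-},\hat A,\delta)$, say $\sigma=\lambda\sigma_{+}+(1-\lambda)\sigma_{-}$ with $|Tr((\sigma_{\pm}-\rho_{\pm})\hat A)|<\delta$. Using $\rho_{0}=\lambda\rho_{+}+(1-\lambda)\rho_{-}$, linearity of the trace gives
\begin{equation}
Tr((\sigma-\rho_{0})\hat A)=\lambda\,Tr((\sigma_{+}-\rho_{+})\hat A)+(1-\lambda)\,Tr((\sigma_{-}-\rho_{-})\hat A),
\end{equation}
and the triangle inequality bounds the modulus by $\lambda\delta+(1-\lambda)\delta=\delta$, so $\sigma\in\mathcal{N}(\rho_{0},\hat A,\delta)$.

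For the converse inclusion $\subseteq$, given $\rho_{m}\in\mathcal{N}(\rho_{0},\hat A,\delta)$, I would use the same translation trick as in the previous lemma, setting $\rho_{m}^{\pm}:=\rho_{m}+(\rho_{\pm}-\rho_{0})$. The identity $\lambda(\rho_{+}-\rho_{0})+(1-\lambda)(\rho_{-}-\rho_{0})=0$ forces $\rho_{m}=\lambda\rho_{m}^{+}+(1-\lambda)\rho_{m}^{-}$, while $\rho_{m}^{\pm}-\rho_{\pm}=\rho_{m}-\rho_{0}$ yields $|Tr((\rho_{m}^{\pm}-\rho_{\pm})\hat A)|=|Tr((\rho_{m}-\rho_{0})\hat A)|<\delta$, placing $\rho_{m}^{\pm}\in\mathcal{N}(\rho_{\pm},\hat A,\delta)$.

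The one subtlety that is not purely calculational, and that is handled only informally in the preceding basic-open-set lemma, is that the translates $\rho_{m}^{\pm}$ must actually belong to $\mathcal{E}_{S}(\mathcal{A})$: trace preservation is automatic because $Tr(\rho_{\pm}-\rho_{0})=0$, but strong positivity of $\rho_{m}^{\pm}$ can fail if $\rho_{m}$ strays far from $\rho_{0}$ relative to the distance to the endpoints $\rho_{\pm}$. I would resolve this exactly as in the companion lemma, by reading the decomposition as one of subsets of $\mathcal{E}_{S}(\mathcal{A})$ and, if necessary, passing to a smaller $\delta$ (equivalently, to an open sub-neighbourhood of $\rho_{0}$ on which the translates remain states). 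The extension to finite convex sums then follows by a routine iteration, in parallel with the analogous remark made after the preceding lemma.
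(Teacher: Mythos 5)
Your argument is correct and is essentially the paper's own: the paper disposes of this lemma with the phrase ``using a similar argument'' pointing back to the trace-norm version, i.e.\ exactly your linearity-plus-triangle-inequality step for one inclusion and the translation trick $\rho_{m}^{\pm}=\rho_{m}+(\rho_{\pm}-\rho_{0})$ for the other, which the paper writes out explicitly in the immediately following lemma for $\mathcal{N}(\hat P_{\psi_{S}},\hat Q_{S},\delta)$. The caveat you raise about the translates remaining strongly positive elements of $\mathcal{E}_{S}(\mathcal{A})$ is a genuine gap, but it is equally present and equally unaddressed in the paper's own treatment.
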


 Applying this to $\mathcal{N}(\rho_{0}, \hat A, \delta) $ when $\rho_{0} = \rho_{S}^{mix}= |\alpha|^{2} \hat P_{\psi_{S}^{+}} + |\beta|^{2} \hat P_{\psi_{S}^{-}}$, $\lambda = |\alpha|^{2}$, $(1- \lambda) = |\beta|^{2}$, $\rho_{+} = \hat P_{\psi_{S}^{+}}$, $\rho_{-} = \hat P_{\psi_{S}^{-}}$ and $\hat A = \hat Q_{S}$ then \begin{equation}
 \mathcal{N}(\rho_{S}^{mix},\hat Q_{S}, \delta) = |\alpha|^{2}\mathcal{N}(\hat P_{\psi^{+}_{S}}, \hat Q_{S}, \delta) + |\beta|^{2} \mathcal{N}(\hat P_{\psi^{-}_{S}}, \hat Q_{S}, \delta).\end{equation}
\begin{lemma}  If $\psi_{S} = \alpha \psi_{S}^{+} + \beta \psi_{S}^{-},$  
$\psi_{S}^{\pm}$ are orthonormal eigenvectors of $\hat Q_{S}$, $|\alpha|^{2} + |\beta|^{2} = 1$ and $\delta > 0$ then\begin{equation}
 \mathcal{N}( \hat P_{\psi_{S}}, \hat Q_{S}, \delta) =  |\alpha|^{2}\mathcal{N}(\hat P_{\psi^{+}_{S}}, \hat Q_{S}, \delta) + |\beta|^{2} \mathcal{N}(\hat P_{\psi^{-}_{S}}, \hat Q_{S}, \delta)
\end{equation}\end{lemma}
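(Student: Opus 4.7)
The plan is to prove the claim in two moves: first reduce $\mathcal{N}(\hat P_{\psi_S}, \hat Q_S, \delta)$ to the mixed-state neighbourhood $\mathcal{N}(\rho_S^{mix}, \hat Q_S, \delta)$ with $\rho_S^{mix} = |\alpha|^2 \hat P_{\psi_S^+} + |\beta|^2 \hat P_{\psi_S^-}$, and then invoke the preceding decomposition lemma for sub-basic open sets to split the mixture.

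For the first step, I would use the fact that $\mathcal{N}(\sigma, \hat Q_S, \delta)$ depends on $\sigma$ only through the single real number $\mathrm{Tr}(\sigma \hat Q_S)$, since by definition
\begin{equation}
\mathcal{N}(\sigma, \hat Q_S, \delta) = \{ \rho \in \EsubS(\mathcal{A}_S) : |\mathrm{Tr}(\rho \hat Q_S) - \mathrm{Tr}(\sigma \hat Q_S)| < \delta \}.
\end{equation}
Because $\psi_S^{\pm}$ are orthonormal eigenvectors of $\hat Q_S$ with eigenvalues $\lambda_{\pm}$, a direct computation gives
\begin{equation}
\mathrm{Tr}(\hat P_{\psi_S} \hat Q_S) = \langle \psi_S, \hat Q_S \psi_S\rangle = |\alpha|^2 \lambda_+ + |\beta|^2 \lambda_- = \mathrm{Tr}(\rho_S^{mix}\hat Q_S),
\end{equation}
where the cross terms $\overline{\alpha}\beta\langle\psi_S^+,\hat Q_S\psi_S^-\rangle$ vanish by orthogonality of $\psi_S^{\pm}$ as eigenvectors at distinct eigenvalues (or, if $\lambda_+=\lambda_-$, simply by orthogonality). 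This equality of centres forces $\mathcal{N}(\hat P_{\psi_S}, \hat Q_S, \delta) = \mathcal{N}(\rho_S^{mix}, \hat Q_S, \delta)$.

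For the second step I simply apply the decomposition lemma stated just above for sub-basic open sets, with $\rho_0 = \rho_S^{mix}$, $\lambda = |\alpha|^2$, $1-\lambda = |\beta|^2$, $\rho_+ = \hat P_{\psi_S^+}$, $\rho_- = \hat P_{\psi_S^-}$ and $\hat A = \hat Q_S$, yielding
\begin{equation}
\mathcal{N}(\rho_S^{mix},\hat Q_S,\delta) = |\alpha|^2 \mathcal{N}(\hat P_{\psi_S^+},\hat Q_S,\delta) + |\beta|^2\mathcal{N}(\hat P_{\psi_S^-},\hat Q_S,\delta).
\end{equation}
Chaining this with the previous equality gives the stated identity.

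The only substantive point in this proof is the trace computation; the rest is bookkeeping that piggybacks on machinery already proven. The potential subtlety is the orthogonality argument inside the trace: if $\lambda_+ = \lambda_-$ then the hypothesis on eigenvectors degenerates and one needs orthonormality of $\psi_S^{\pm}$ in $\mathcal{H}_S$ alone to kill the cross terms, while if $\lambda_+ \neq \lambda_-$ the spectral-theoretic orthogonality delivers the same conclusion. In either case the cross terms disappear, so the reduction to $\rho_S^{mix}$ is clean and the lemma follows with no further obstruction.
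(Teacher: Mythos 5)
Your proof is correct and takes essentially the same route as the paper: both arguments rest on the vanishing of the cross terms in $\mathrm{Tr}(\hat P_{\psi_{S}}\hat Q_{S})$, giving $\mathrm{Tr}(\hat P_{\psi_{S}}\hat Q_{S}) = |\alpha|^{2}\mathrm{Tr}(\hat P_{\psi_{S}^{+}}\hat Q_{S}) + |\beta|^{2}\mathrm{Tr}(\hat P_{\psi_{S}^{-}}\hat Q_{S})$, combined with the convex decomposition of the sub-basic sets $\mathcal{N}(\rho_{0},\hat A,\delta)$. Your packaging --- observing that $\mathcal{N}(\sigma,\hat Q_{S},\delta)$ depends on $\sigma$ only through $\mathrm{Tr}(\sigma\hat Q_{S})$, so that $\mathcal{N}(\hat P_{\psi_{S}},\hat Q_{S},\delta)=\mathcal{N}(\rho_{S}^{mix},\hat Q_{S},\delta)$, and then citing the preceding decomposition lemma --- is exactly the argument the paper sketches in the main text, whereas the appendix proof re-derives the two inclusions directly from the same trace identity.
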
  
\begin{proof} Firstly $ \mathcal{N}(\rho_{S}^{mix},\hat Q_{S}, \delta) = |\alpha|^{2}\mathcal{N}(\hat P_{\psi^{+}_{S}}, \hat Q_{S}, \delta) + |\beta|^{2} \mathcal{N}(\hat P_{\psi^{-}_{S}}, \hat Q_{S}, \delta)$ was shown above  and $Tr\hat P_{\psi_{S}} \hat Q_{S} = |\alpha|^{2} Tr \hat P_{\psi^{+}_{S}}\hat Q_{S} +  |\beta|^{2} Tr \hat P_{\psi^{-}_{S}}\hat Q_{S}$. If $\rho = |\alpha|^{2} \rho_{+} +  |\beta|^{2} \rho_{-}$ with $\rho_{t}\in \mathcal{N}( \hat P_{\psi^{t}_{S}}, \hat Q_{S}, \delta)$ for $t =\pm$ then 
$|Tr(\rho\hat Q_{S} - \hat P_{\psi_{S}}\hat Q_{S}) | = | |\alpha|^{2}Tr (\rho_{+}\hat Q_{S} - \hat P_{\psi_{S}^{+}}\hat Q_{S} ) + |\beta|^{2}Tr(\rho_{-} \hat Q_{S} - \hat P_{\psi_{S}^{-}}\hat Q_{S}) | <\delta,$ showing that  $|\alpha|^{2}\mathcal{N}(\hat P_{\psi^{+}_{S}}, \hat Q_{S}, \delta) + |\beta|^{2} \mathcal{N}(\hat P_{\psi^{-}_{S}}, \hat Q_{S}, \delta)\subseteq \mathcal{N}( \hat P_{\psi_{S}}, \hat Q_{S}, \delta)$. 

On the other hand if $\rho_{c} \in \mathcal{N}( \hat P_{\psi_{S}}, \hat Q_{S}, \delta)$ then $\rho_{c} = |\alpha|^{2} \rho_{c}^{+} +  |\beta|^{2} \rho_{c}^{-}$ where $\rho_{c}^{+} = \rho_{c} +(\hat P_{\psi_{S}^{+}} - \hat P_{\psi_{S}} )$ and $\rho_{c}^{-} = \rho_{c} +(\hat P_{\psi_{S}^{-}} - \hat P_{\psi_{S}} )$  so that $\rho_{c}^{+}\in \mathcal{N}(\hat P_{\psi^{+}_{S}}, \hat Q_{S}, \delta)$ and $\rho_{c}^{-}\in \mathcal{N}(\hat P_{\psi^{-}_{S}}, \hat Q_{S}, \delta)$, therefore $\mathcal{N}( \hat P_{\psi_{S}}, \hat Q_{S}, \delta)\subseteq  |\alpha|^{2}\mathcal{N}(\hat P_{\psi^{+}_{S}}, \hat Q_{S}, \delta) + |\beta|^{2} \mathcal{N}(\hat P_{\psi^{-}_{S}}, \hat Q_{S}, \delta).$\end{proof}

\section*{Acknowledgments}
I wish to thank Professor Dipankar Home for introducing me to the quantum mechanical measurement problem many years ago. Any misinterpretations of the problem are my own doing.


\end{document}